\pgfplotsset{compat=newest}
\newcommand\hl{\bgroup\markoverwith
  {\textcolor{yellow}{\rule[-.5ex]{2pt}{2.5ex}}}\ULon}
\newtheorem{problem}{Problem}
\newtheorem{theorem}{Theorem}[section]
\newtheorem{definition}[theorem]{Definition}
\newtheorem{remark}[theorem]{Remark}
\newcommand{\PreserveBackslash}[1]{\let\temp=\\#1\let\\=\temp}
\newcolumntype{C}[1]{>{\PreserveBackslash\centering}p{#1}}
\newcolumntype{R}[1]{>{\PreserveBackslash\raggedleft}p{#1}}
\newcolumntype{L}[1]{>{\PreserveBackslash\raggedright}p{#1}}
\newenvironment{fminipage}%
{\begin{Sbox}\begin{minipage}}%
		{\end{minipage}\end{Sbox}\fbox{\TheSbox}}
\def\norm#1{\left\| #1 \right\|}
\def\smallnorm#1{\| #1 \|}
\def\calG{\mathcal{G}}
\def\calN{\mathcal{N}}
\newfont{\nset}{msbm10}
\def\norm#1{\left\| #1 \right\|}
\newcommand{\removelatexerror}{\let\@latex@error\@gobble}
\newcommand\LL{\bm{\mathit{L}}}
\newtheorem{thm}{Theorem}
\newtheorem{lem}[thm]{Lemma}
\newdefinition{rmk}{Remark}
\newproof{pf}{Proof}
\newproof{pot}{Proof of Theorem \ref{thm2}}
\newcommand\aaa{\boldsymbol{\mathit{a}}}
\newcommand\bb{\boldsymbol{\mathit{b}}}
\newcommand\cc{\boldsymbol{\mathit{c}}}
\newcommand\ee{\boldsymbol{\mathit{e}}}
\newcommand\ww{\boldsymbol{\mathit{w}}}
\newcommand\uu{\boldsymbol{\mathit{u}}}
\newcommand\AAA{\boldsymbol{\mathit{A}}}
\newcommand\BB{\boldsymbol{\mathit{B}}}
\newcommand\CC{\boldsymbol{\mathit{C}}}
\newcommand\DD{\boldsymbol{\mathit{D}}}
\newcommand\EE{\boldsymbol{\mathit{E}}}
\newcommand\PP{\boldsymbol{\mathit{P}}}
\newcommand\MM{\boldsymbol{\mathit{M}}}
\newcommand\II{\boldsymbol{\mathit{I}}}
\newcommand\vvv{\boldsymbol{\mathit{v}}}
\journal{Theoretical Computer Science}
\begin{document}
\begin{frontmatter}
\title{Optimization on the smallest eigenvalue of grounded Laplacian matrix via edge addition}
%\title{This is a specimen title\tnoteref{t1,t2}}
%\tnotetext[t1]{11}
%\tnotetext[t2]{11}
\author[1]{Xiaotian Zhou}
\ead{22110240080@m.fudan.edu.cn}
\author[1]{Haoxin Sun}
\ead{21210240097@m.fudan.edu.cn}
\author[2]{Wei Li\corref{cor1}}
\ead{fd\_liwei@fudan.edu.cn}
\author[1,3,4]{Zhongzhi Zhang\corref{cor1}}
\ead{zhangzz@fudan.edu.cn}
%\fnref{fn1,fn3}
\cortext[cor1]{Corresponding author}
%\fntext[fn1]{This is the first author footnote.}
%\fntext[fn2]{11}
%\fntext[fn3]{Yet another author footnote.}
\affiliation[1]{organization={Shanghai Key Laboratory of Intelligent Information Processing, School of Computer Science, Fudan University},
%addressline={Fudan University},
postcode={200433},
city={Shanghai},
country={China}}
\affiliation[2]{organization={Academy for Engineering and Technology, Fudan University},
%addressline={Fudan University},
postcode={200433},
city={Shanghai},
country={China}}
\affiliation[3]{organization={Shanghai Engineering Research Institute of Blockchains, Fudan University},
%addressline={Fudan University},
postcode={200433},
city={Shanghai},
country={China}}
\affiliation[4]{organization={Research Institute of Intelligent Complex Systems, Fudan University},
%addressline={Fudan University},
postcode={200433},
city={Shanghai},
country={China}}

%\thanks{This work was supported by the National Natural Science Foundation of China (Nos. 61872093 and U20B2051),  Shanghai Municipal Science and Technology Major Project  (Nos.  2018SHZDZX01 and 2021SHZDZX03),  ZJ Lab, and Shanghai Center for Brain Science and Brain-Inspired Technology. Run Wang was also supported by Fudan’s Undergraduate Research Opportunities Program (FDUROP) under Grant No. 2195200241021. (Run Wang and Xiaotian Zhou contribute equally. Corresponding authors:~Wei~Li and~Zhongzhi~Zhang)}
%\thanks{Run Wang, Xiaotian Zhou  and Zhongzhi Zhang are with the Shanghai Key Laboratory of Intelligent Information Processing, School of Computer Science, Fudan University, Shanghai 200433, China. Wei Li is with the Academy for Engineering and Technology, Fudan University, Shanghai, 200433, China. Zhongzhi~Zhang is also with the Shanghai Engineering Research Institute of Blockchains,  Fudan University, Shanghai 200433, China; and Research Institute of Intelligent Complex Systems, Fudan University, Shanghai 200433, China. (e-mail: runwang18@fudan.edu.cn; 20210240043@fudan.edu.cn; fd\_liwei@fudan.edu.cn; zhangzz@fudan.edu.cn).}

\begin{abstract}
The grounded Laplacian matrix $\LL_{-S}$ of a graph $\calG=(V,E)$ with $n=|V|$ nodes and $m=|E|$ edges is a $(n-s)\times (n-s)$ submatrix of its Laplacian matrix $\LL$, obtained from $\LL$ by deleting rows and columns corresponding to $s=|S| \ll n $ ground nodes forming set $S\subset V$. The smallest eigenvalue of $\LL_{-S}$ plays an important role in various practical scenarios, such as characterizing the convergence rate of leader-follower opinion dynamics, with a larger eigenvalue indicating faster convergence of opinion. In this paper, we study the problem of adding $k \ll n$ edges among all the nonexistent edges forming the candidate edge set $Q = (V\times V)\backslash E$, in order to maximize the smallest eigenvalue of the grounded Laplacian matrix. We show that the objective function of the combinatorial optimization problem is monotone but non-submodular. To solve the problem, we first simplify the problem by restricting the candidate edge set $Q$ to be $(S\times (V\backslash S))\backslash E$, and prove that it has the same optimal solution as the original problem, although the size of set $Q$ is reduced from $O(n^2)$ to $O(n)$. Then, we propose two greedy approximation algorithms. One is a simple greedy algorithm with an approximation ratio $(1-e^{-\alpha\gamma})/\alpha$ and time complexity $O(kn^4)$, where  $\gamma$ and $\alpha$ are, respectively, submodularity ratio and curvature, whose bounds are provided for some particular cases. The other is a fast greedy algorithm without approximation guarantee, which has a running time $\tilde{O}(km)$, where $\tilde{O}(\cdot)$ suppresses the ${\rm poly} (\log n)$ factors. Numerous experiments on various real networks are performed to validate the superiority of our algorithms, in terms of effectiveness and efficiency.
%The grounded Laplacian matrix $\LL_{-S}$ of a graph $\calG=(V,E)$ with $n=|V|$ nodes and $m=|E|$ edges is a $(n-s)\times (n-s)$ submatrix of its Laplacian matrix $\LL$, obtained from $\LL$ by deleting rows and columns corresponding to $s=|S|\ll n $ ground nodes forming set $S\subset V$. The smallest eigenvalue of $\LL_{-S}$ plays important roles in various practical scenarios, with a larger eigenvalue indicating better performance. In this paper, we study the problem of adding $k\ll n$ edges among all nonexistent edges forming the candidate edge set $Q = (V\times V)\backslash E$, to maximize the smallest eigenvalue of the grounded Laplacian matrix. We show that the objective function of the problem is monotone but non-submodular. To solve the problem, we first simplify the problem by reducing the candidate edge set. Then, we propose two greedy algorithms. One is a simple greedy algorithm with an approximation ratio $(1-e^{-\alpha\gamma})/\alpha$ and time complexity $O(kn^4)$, where $\gamma$ and $\alpha$ are, respectively, submodularity ratio and curvature. The other is a fast greedy algorithm without approximation guarantee, which has a running time $\tilde{O}(km)$, where $\tilde{O}(\cdot)$ suppresses the ${\rm poly}(\log n)$ factors. We perform numerous experiments on various networks to validate the effectiveness and efficiency of our algorithms.
\end{abstract}

\iffalse
\begin{graphicalabstract}
%\includegraphics{grabs}
\end{graphicalabstract}

%%Research highlights
\begin{highlights}
\item Research highlight 1
\item Research highlight 2
\end{highlights}
\fi

\begin{keyword}
Grounded Laplacian \sep spectral property \sep graph mining \sep linear algorithm \sep matrix perturbation \sep partial derivative
%\PACS 0000 \sep 1111
%% MSC codes here, in the form: \MSC code \sep code
%% or \MSC[2008] code \sep code (2000 is the default)
%\MSC 0000 \sep 1111
\end{keyword}
\end{frontmatter}
% make the title area
%\newpageafter{title}
%\newpageafter{author}
%\newpageafter{abstract}

\section{Introduction}

The eigenvalues of different matrices associated with a network encode rich information of various structural properties and dynamical processes taking place on the network~\cite{Ne03}. It has been shown that the largest eigenvalue of the adjacency matrix characterizes the thresholds for the susceptible-infectious-susceptible epidemic dynamics~\cite{WaChWaFa03,ChWaWaLeFa08,VaOmKo08} and bond percolation~\cite{BoBoChRi10} on a graph. The number of distinct eigenvalues of the adjacency matrix provides an upper bound for the diameter of a connected graph~\cite{Ho05}. And the sum of the powers of each eigenvalue measures the structural robustness in complex networks~\cite{WuBaTaDe11}. In the context of Laplacian matrix, its smallest and largest nonzero eigenvalues are, respectively, related to the convergence time and delay robustness of the consensus problem~\cite{OlMu04}, while the ratio of the largest eigenvalue and the smallest nonzero eigenvalue represents the synchronizability of the graph~\cite{BaPe02}, with smaller ratio corresponding to better synchronizability. Moreover, it has been demonstrated that the product of all the nonzero eigenvalues of the Laplacian matrix determines the number of spanning trees~\cite{LiPaYiZh20}, and that the reciprocal sum of these eigenvalues determines the sum of resistance distances~\cite{KlRa93,LiZh18,YiZhPa20,YiYaZhZhPa22} and the sum of hitting times~\cite{Te91,ChRaRuSm89,ShZh19} over all node pairs.

Apart from the adjacency matrix and Laplacian matrix, the eigenvalues of the grounded Laplacian matrix of a graph also characterize various dynamical processes or systems defined on the graph~\cite{RaJiMeEg09,PaBa10, PiShFiSu18, LiPeShYiZh19,LiXuLuChZe21,ToRo22}. For a graph $\calG=(V,E)$ with Laplacian matrix $\LL$, the grounded Laplacian matrix $\LL_{-S}$ is a principal submatrix of $\LL$, which is obtained from $\LL$ by removing rows and columns corresponding to nodes in a given set $S\subseteq V$~\cite{Mi93}. The $s=|S|$ nodes in set $S$ are called grounded nodes, which have different meanings in different dynamical processes. In leader-follower opinion dynamics~\cite{RaJiMeEg09}, $S$ denotes the set of leader nodes, while in the pinning control system~\cite{LiXuLuChZe21}, $S$ represents the set of pinned nodes. For an undirected connected graph, matrix $\LL_{-S}$ are positive definite, and its eigenvalues, especially the smallest one denoted by $\lambda(\LL_{-S})$, are related to diverse dynamical processes. For example, the sum of the reciprocal of eigenvalues of $\LL_{-S}$ measures the robustness of a leader-follower system with follower nodes subject to noise~\cite{PaBa10} or the centrality of a node group~\cite{LiPeShYiZh19}. The smallest eigenvalue $\lambda(\LL_{-S})$ quantifies the effectiveness of a pinning scheme for pinning control~\cite{LiXuLuChZe21}, as well as the convergence rate of leader-follower dynamical systems~\cite{RaJiMeEg09}, with larger $\lambda(\LL_{-S})$ corresponding to better performance of the systems.

Since the smallest eigenvalue $\lambda(\LL_{-S})$ of matrix $\LL_{-S}$ encodes the performance of various dynamical systems, with larger $\lambda(\LL_{-S})$ indicating better performance, a concerted effort has been devoted to optimizing $\lambda(\LL_{-S})$. Recently, the problem of optimizing $\lambda(\LL_{-S})$ has been studied by selecting a fixed number of grounded nodes~\cite{LiXuLuChZe21,WaLiXuLi19,ClAlBuPo12,ClBuPo12,ClHoBuPo18,ZhTa20}. However, the problem to optimize $\lambda(\LL_{-S})$ by edge operation has been less studied, despite the fact that edge operation has been frequently used in various application scenarios. As a less invasive modification of network structure, the operation of adding edges in a graph has good explanations in application settings, which is equivalent to making friends in social networks such as Twitter or establishing physical lines in power grids. In this paper, we propose and study the following optimization problem by adding edges: For a given connected undirected unweighted $\calG=(V,E)$ with $n$ nodes and $m\ll n^2$ edges, grounded node set $S\subset V$ of $s$ nodes with $s=|S|\ll n$, a positive integer $k\ll n$, how to add $k$ nonexistent edges  from a candidate edge set $Q=(V \times V) \backslash E$ to graph $\calG$, so that the smallest eigenvalue of the grounded Laplacian matrix for the resulting graph is maximized.

%Despite optimization on the smallest eigenvalue of the grounded Laplacian matrix has been heavily studied in recent years based on the selection of grounded nodes~\cite{LiXuLuChZe21,WaLiXuLi19,ClAlBuPo12,ClBuPo12,ClHoBuPo18,ZhTa20}, the operation on the edge level is rarely studied, and to the best of our knowledge, only~\cite{ShSoSaJo14} considered the problem of adding edges to maximize the smallest eigenvalue of a grounded Laplacian matrix. So we address the smallest eigenvalue maximization problem by adding edges: For a given connected undirected unweighted network $\calG=(V,E)$ with $n$ nodes and $m$ edges, and a $s$-sized grounded node set $S\subset V$, $s=|S|\ll n$, how to add $k\ll n$ new nonexistent edges from a candidate edge set $Q=(V \times V) \backslash E$ to the graph $\calG$, so that the smallest eigenvalue of the grounded Laplacian matrix is maximized.

The main reason for studying the above problem lies in its motivating applications in practical situations, such as leader-follower opinion dynamics and pinning control systems. For leader-follower opinion dynamics, the equilibrium behavior of opinions is relevant only if opinions converge in a reasonably small time~\cite{El93}. The operation of adding an edge indicates building friendship between the two individuals linked by the edge, which can lead to faster convergence of followers’ opinion to that of leaders. Regarding pinning control systems, adding edges such as establishing physical lines in a power grid can improve the effectiveness of pinning control without increasing pinned nodes. Our formulated problem is combinatorial in nature, we thus aim to provide a suboptimal solution. Our main contributions are summarized as follows.
\begin{itemize}
  \item We show that the objective function is non-submodular, although it is monotonically increasing.
  %\item We simplify the problem by reducing the candidate edge set $Q=(V \times V)\backslash E$ that contains all nonexistent edges to the edge set $Q=(S \times (V\backslash S))\backslash E$ that only contains nonexistent edges linking one grounded node and one non-grounded node. We also prove the optimal solution of the simplified problem is an optimal solution of the original problem.
 \item We simplify the problem by reducing the candidate edge set $Q=(V \times V)\backslash E$ with $O(n^2)$ edges to its proper set $ (S \times (V\backslash S))\backslash E$ with $O(n)$ edges, and prove that the simplified problem has the same optimal solution as the original problem.
 \item We propose a simple greedy algorithm with $O(kn^4)$ time complexity and an approximation ratio $(1-e^{\alpha\gamma})/\alpha$, where $\gamma$ and $\alpha$ are, respectively, the submodularity ratio and curvature of the objective function. We provide a lower bound for $\gamma$ and an upper bound for $\alpha$ for some particular cases.
  \item We propose a fast approximation  algorithm with $\tilde{O}(km)$ time complexity, where $\tilde{O}(\cdot)$ notation suppresses the ${\rm poly} (\log n)$ factors.
  \item We perform various experiments on real networks, which show that both of our algorithms are effective, outperforming alternative baselines, and that our fast algorithm  is scalable to large graphs with over one million nodes.
\end{itemize}

\section{Preliminaries}

In this section, we introduce some notations,  concepts of graphs and some related matrices, and concepts about set functions.

\subsection{Notations}

In this paper, we use normal lowercase letters like $a, b, c$ to denote scalars in $\mathbb{R}$,  normal uppercase letters like $A, B, C$ to represent sets,  bold lowercase letters like $\aaa, \bb, \cc$ to denote vectors, and  bold uppercase letters like $\AAA, \BB, \CC$ to denote matrices. We write $\aaa^\top$ and $\AAA^\top$to represent the transposes of vector $\aaa$ and matrix $\AAA$, respectively. We use $\aaa_i$  to denote the $i$-th element of vector $\aaa$, and notation $\AAA_{i,j}$ to denote entry $(i,j)$ of  matrix $\AAA$. Let $\II$ be the identify matrix of approximate dimension.
Let  $\EE_{i,j}$ denote the matrix with the element at   row $i$ and column  $j$   being 1 and other elements being 0. For a positive definite matrix with the minimum eigenvalue $\lambda$ and  corresponding unit eigenvector $\uu$, we call $(\lambda,\uu)$ as its smallest eigen-pair. We use $\AAA_{-S}$ to denote the submatrix of matrix $\AAA$ obtained from $\AAA$ by deleting those rows and columns corresponding to nodes in  set $S$, where if $S=\{i\}$, we use $\AAA_{-i}$ to represent $\AAA_{-\{i\}}$ for simplicity.

%For a $n\times n$ matrix $\AAAa$, $1\leq i \leq n$, we use $\AAAa_{-i}$ to denote the submatrix of $\AAAa$ by deleting $i$-th row and column of $\AAAa$. We use $\ee_i$ to denote the vector with $i$-th element being 1 while others being 0, and $\EE_{i,j}$ to denote the matrix with the element at $i$-th row and $j$-th column being 1 while others being 0.

\subsection{Graph and grounded Laplacian matrix}

Let $\calG=(V,E)$  denote a connected undirected unweighted graph with $|V|=n$ nodes and $|E|=m$ edges. The set $\calN_i$ of neighbors of node $i \in V$ is defined by $\calN_i=\{j \in V|(i,j)\in E \}$. Then the degree of node $i$ is $d_i=|\calN_i|$. The adjacency matrix $\AAA$ of graph $\calG$ is a $n \times n$ matrix,  whose entry $\AAA_{i,j}=\AAA_{j,i}=1$ if there is an edge $e=(i,j)\in E$ linking nodes $i$ and $j$, $\AAA_{i,j}=\AAA_{j,i}=0$ otherwise. The degree matrix $\DD$ of graph $\calG$ is   $\text{diag}(d_1,\dots,d_n)$  with the $i$-th  diagonal element being $d_i$. The Laplacian matrix of graph $\calG$ is $\LL=\DD-\AAA$, which is a symmetric positive semi-definite matrix with a unique smallest eigenvalue 0.

%\subsection{Grounded Laplacians}

For  graph $\calG=(V,E)$, a grounded Laplacian matrix $\LL_{-S}$ induced by set $S \subset V$ with $s=|S|$ nodes is an $(n-s)\times (n-s)$ principle submatrix of $\LL$ obtained by removing rows and columns of Laplacian matrix $\LL$ corresponding to those nodes in $S$, which are called grounded nodes. We use  $\calG[V \backslash S]$ to represent the subgraph of $\calG$, which is formed from  $\calG$ by deleting those nodes in set $S$ and the edges incident to nodes in $S$. The grounded Laplacian matrix $\LL_{-S}$ is a symmetric diagonally-dominant M-matrix (SDDM). By Perron–Frobenius theorem~\cite{Ma00}, the smallest eigenvalue  $\lambda(\LL_{-S})$ of matrix $\LL_{-S}$ is positive, and the components of  the eigenvector associated with  $\lambda(\LL_{-S})$ are positive (non-negative) if $\calG[V\backslash S]$ is connected (disconnected). Lemma~\ref{lem:lmd} provides the relation between the smallest eigenvalue $\lambda(\LL_{-S})$ and the degree of any node $i\in V\backslash S$.
%More precisely, both smallest eigenvalue and components in smallest eigenvalue are positive iff the subgraph induced by $V \backslash S$ is connected.

\begin{lem}\label{lem:lmd}
Given a graph $\calG=(V,E)$ with grounded Laplacian matrix  $\LL_{-S}$ induced by the grounded node set $S \subset V$,  then for any node $i\in V\backslash S$, its degree $d_i$ is not less than the smallest eigenvalue $\lambda(\LL_{-S})$ of $\LL_{-S}$.
\end{lem}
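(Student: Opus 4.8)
The plan is to use the Rayleigh quotient characterization of the smallest eigenvalue together with a well-chosen test vector. Let $\uu$ be the unit eigenvector of $\LL_{-S}$ associated with $\lambda(\LL_{-S})$, so that $\LL_{-S}\uu = \lambda(\LL_{-S})\uu$. Since $\LL_{-S}$ is a principal submatrix of $\LL$ indexed by $V\backslash S$, for any fixed $i\in V\backslash S$ the $i$-th coordinate of the eigenvalue equation reads
\begin{equation}
\lambda(\LL_{-S})\,\uu_i \;=\; (\LL_{-S}\uu)_i \;=\; d_i\,\uu_i \;-\; \sum_{j\in\calN_i\cap(V\backslash S)} \uu_j .
\end{equation}

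The key step is to pick $i$ to be the index where $\uu$ attains its largest value in absolute value; by the Perron--Frobenius discussion already recorded in the excerpt we may take $\uu$ to have non-negative entries, so we choose $i\in V\backslash S$ with $\uu_i = \max_{\ell\in V\backslash S}\uu_\ell$, and note $\uu_i>0$ since $\uu$ is a unit vector. Rearranging the identity above gives
\begin{equation}
\lambda(\LL_{-S}) \;=\; d_i \;-\; \frac{1}{\uu_i}\sum_{j\in\calN_i\cap(V\backslash S)} \uu_j .
\end{equation}
Because $0\le \uu_j \le \uu_i$ for every $j$, each term $\uu_j/\uu_i$ lies in $[0,1]$, so the subtracted sum is non-negative, yielding $\lambda(\LL_{-S}) \le d_i$ for this particular maximizing node $i$.

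To upgrade this to the claimed statement for an \emph{arbitrary} node $i\in V\backslash S$, I would invoke the interlacing/monotonicity property of the smallest eigenvalue under deletion of rows and columns of an SDDM matrix: grounding additional nodes only increases the smallest eigenvalue, i.e.\ for $S\subseteq S'$ one has $\lambda(\LL_{-S})\le\lambda(\LL_{-S'})$. Given an arbitrary target node $i\in V\backslash S$, apply the maximizing-coordinate argument inside the subgraph and then note that $d_i$ in the full graph $\calG$ is at least the degree of $i$ in $\calG[V\backslash S]$; alternatively, and more cleanly, one can run the maximizing-coordinate bound directly and then observe that, since the bound $\lambda(\LL_{-S})\le d_i$ holds for the node achieving the maximum entry, and repeatedly grounding all nodes except a chosen one drives $\lambda$ up to the grounded value $d_i$ of that single node, we get $\lambda(\LL_{-S})\le\lambda(\LL_{-(V\backslash\{i\})}) = d_i^{\mathrm{full}}$ is not quite right—so the honest route is the direct one: the inequality is only immediately forced at the argmax coordinate, and for a general node $i$ we combine $\lambda(\LL_{-S})\le d_{i^*}$ at the argmax $i^*$ with a separate elementary observation. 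The cleanest fix is simply: for \emph{any} $i\in V\backslash S$, Gershgorin's disc theorem applied to row $i$ of $\LL_{-S}$ places every eigenvalue of $\LL_{-S}$ in the disc centered at $(\LL_{-S})_{ii}=d_i$ with radius $\sum_{j\ne i}|(\LL_{-S})_{ij}| \le d_i$; hence $\lambda(\LL_{-S})\ge d_i - d_i = 0$ is the lower bound, and the \emph{upper} Gershgorin endpoint gives $\lambda(\LL_{-S}) \le d_i + (\text{radius})$, which is too weak.

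So the genuinely correct and simplest argument is the maximizing-coordinate one, and the only subtlety—the main obstacle—is justifying that the inequality $\lambda(\LL_{-S})\le d_i$ holds for \emph{every} $i$ and not merely the argmax. I expect the paper resolves this by the monotonicity of $\lambda(\LL_{-S})$ under grounding: ground all nodes of $V\backslash S$ except $i$; the resulting $1\times 1$ grounded Laplacian equals $d_i$ (the full-graph degree), and since grounding can only increase the smallest eigenvalue, $\lambda(\LL_{-S}) \le d_i$. I would present the maximizing-coordinate computation as the heart of the proof and cite the grounding-monotonicity (a standard consequence of Cauchy interlacing for the principal submatrices of the positive-definite $\LL_{-S}$, or of the M-matrix/Perron--Frobenius structure) to extend it from the argmax to all $i\in V\backslash S$.
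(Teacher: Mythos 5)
Your proposal is correct in the end, but the argument you designate as ``the heart of the proof'' is not actually needed, and the step you treat as an afterthought is the entire proof. The paper's argument is a one-liner: since $\LL_{-S}-\lambda(\LL_{-S})\II$ is positive semi-definite, testing it against the standard basis vector $\ee_i$ gives $(\LL_{-S})_{i,i}-\lambda(\LL_{-S})\ge 0$, and $(\LL_{-S})_{i,i}=d_i$ because the diagonal of the Laplacian records degrees in the full graph $\calG$. Your closing observation --- ground all nodes except $i$, note that the resulting $1\times 1$ grounded Laplacian is $[d_i]$, and invoke interlacing of principal submatrices --- is exactly this argument in different clothing: interlacing down to a $1\times 1$ principal submatrix \emph{is} the Rayleigh-quotient bound at $\ee_i$. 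That step alone proves the lemma for every $i\in V\backslash S$ simultaneously, so the maximizing-coordinate computation (which, as you correctly note, only yields the bound at the argmax of $\uu$) and the Gershgorin detour (which, as you also correctly note, points the wrong way) are both superfluous. If you strip those out and present only the principal-submatrix/Rayleigh step, you have the paper's proof.
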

\begin{proof}
Let $(\lambda(\LL_{-S}),\uu)$ be the smallest eigen-pair of matrix $\LL_{-S}$.  Then, for any unit vector $\vvv$, we have
\begin{equation*}
\vvv^\top (\LL_{-S} -\lambda(\LL_{-S}) \II) \vvv \geq \uu^\top (\LL_{-S} -\lambda(\LL_{-S}) \II) \uu =0.
\end{equation*}
Replacing $\ee_i$ with $\vvv$ leads to
\begin{equation*}
 d_i = (\LL_{-S})_{i,i} \geq \lambda(\LL_{-S})\,,
\end{equation*}
which completes the proof.
\end{proof}

%According to

%For a given node set $S \subset V$, we use $\LL(S)$ to denote the grounded Laplacian matrix induced by $S$, which is obtained by removing rows and columns from the matrix $\LL$ corresponding to the nodes in $S$. And we use the notation $\calG[V\backslash S]$ to denote the subgraph with node set $V\backslash S$ and edge set $(V\backslash S)\times (V\backslash S) \cap E$. For a symmetric semi-positive matrix $\SS$, we use the notation $\lambda(\SS)\geq 0$ to denote the smallest eigenvalue of $\SS$.

\subsection{Concepts related to set function}

We give a brief introduction to some concepts about set functions. %For a set $S$, we use $S+u$ to denote $S\cup\{u\}$, and $S-u$ to denote $S\backslash\{i\}$.
%Then, the definitions of monotonicity and submodularity are given in~\ref{def:mono} and~\ref{def:sub}, respectively.
\begin{definition}[Monotonicity]\label{def:mono}
For set $Q$, a set function $f:2^{Q}\rightarrow \mathbb{R}$ is  monotone non-decreasing if $f(H) \le f(T)$ holds for all $H \subseteq T \subseteq Q$.
\end{definition}

\begin{definition}[Submodularity]\label{def:sub}
A set function $f:2^{Q}\rightarrow \mathbb{R}$ is said to be submodular if
\begin{equation*}\label{submod}
f(H \cup \{u\}) - f(H) \geq f(T\cup \{u\}) - f(T)
\end{equation*}
 holds for all
$H \subseteq T \subseteq Q$ and $u \in Q\backslash T$.
\end{definition}

%\subsection{Submodularity Ratio and Curvature}
For those functions without submodularity, one can define some  quantities to observe the gap between them and submodular functions.

%Besides submodularity, for a set function $f(\cdot)$, its submodularity ratio $\gamma$, and the curvature $\alpha$ are defined as follows.

\begin{definition}[Submodularity ratio~\cite{AbDa11}]
For a non-negative set function $f:2^{Q}\rightarrow \mathbb{R}$, its submodularity ratio is defined as the largest scalar $\gamma$ satisfying
\begin{equation*}
\sum_{u \in T\backslash H} (f(H \cup \{u\})-f(H)) \geq \gamma (f(T)-f(H)), \forall H\subseteq T \subseteq Q.
\end{equation*}
%The submodularity ratio of a non-negative set function $f(\cdot)$ with respect to a parameter $k$ and a set $Q$ is the largest scalar $\gamma$ s.t.
%\begin{equation}
%\gamma= \min_{P \subseteq Q, S: |S|\leq k, S \cap P=\emptyset}\frac{\sum_{x\in S}f(P\cup \{x\})-f(P)}{f(P \cup S)-f(P)}.
%\end{equation}
%$0\leq \gamma \leq 1$, and $f(\cdot)$ is submodular iff $\gamma=1$.
\end{definition}

\begin{definition}[Curvature~\cite{BiBuKrTs17}]
For a non-negative set function $f:2^{Q}\rightarrow \mathbb{R}$, its curvature is defined as the smallest scalar $\alpha$ satisfying
\begin{equation*}
f(T)-f(T\backslash \{u\}) \geq (1-\alpha)(f(H)-f(H\backslash \{u\})), \forall H\subseteq T\subseteq Q, u \in H.
\end{equation*}
%The curvature of a non-negative function $f(\cdot)$ with respect the a set $Q$ is the smallest scalar $\alpha$ s.t.
%\begin{equation}
%\alpha = \max_{S,W\subseteq Q,i \in S \backslash W} 1-\frac{f(S\cup W)-f(S\cup W-i)}{f(S)-f(S-i)}.
%\end{equation}
%$0\leq \alpha \leq 1$, and $f(\cdot)$ is supermodular iff $\alpha=0$.
\end{definition}

%\subsection{Convergence rate*****}

%For a graph $\calG=(V,E)$, its Laplacian matrix $\LL$, a grounded node set $S \subset V$, and its grounded Laplacian matrix $\LL_{-S}$ induced by $S$, the convergence rate of it is the smallest eigenvalue of $\LL_{-S}$, which is denoted by $\lambda(\LL_{-S})$.

\section{Problem formulation}
\label{section:formulation}

The smallest eigenvalue $\lambda(\LL_{-S})$ of  grounded Laplacian matrix $\LL_{-S}$ is a good measure in many  application scenarios. For example, it can be used to characterize the convergence speed of a leader-follower system~\cite{RaJiMeEg09}, the effectiveness of pinning control~\cite{LiXuLuChZe21}, robustness to disturbances of the leader-follower system via the $\mathcal{H}_\infty$ norm~\cite{PiShFiSu18}, to a name a few. In these practical aspects, a larger value of $\lambda(\LL_{-S})$ indicates that the systems have a better performance. As will be shown later, adding edges to graph $\calG$ will lead to an increase of $\lambda(\LL_{-S})$,  which motivates us to propose and study the problem of how to maximize $\lambda(\LL_{-S})$  by creating $k$ new edges.

\subsection{Monotonicity}

For a connected undirected graph $\calG=(V,E)$,  if a set $T$ of new edges is added to $\calG$ from a candidate edge set $Q$ containing all non-existing edges, we obtain a new graph $\calG(T)=(V,E\cup T)$. Let $\LL_{-S}(T)$  be the grounded Laplacian matrix of graph $\calG(T)$,  and let $\lambda(T)=\lambda(\LL_{-S}(T))$ be the smallest eigenvalue of $\LL_{-S}(T)$.

\begin{lem}\label{lem:mon}
For a connected graph $\calG=(V,E)$ with grounded Laplacian matrix $\LL_{-S}$  induced by grounded nodes in set $S$, if we add an edge $e\notin E$ to $\calG$ forming graph $\calG(\{e\})$, then
\begin{equation}
    \lambda( \{e\}) \geq \lambda(\emptyset)=\lambda(\LL_{-S}).
\end{equation}
\end{lem}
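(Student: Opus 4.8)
The plan is to show that adding an edge can only increase the grounded Laplacian matrix in the positive semidefinite (Loewner) order, and then invoke monotonicity of eigenvalues under this order. First I would write $\LL_{-S}(\{e\}) = \LL_{-S} + \BB_{-S}$, where $\BB$ is the Laplacian of the single-edge graph on vertex set $V$ containing only $e=(i,j)$; concretely $\BB = (\ee_i - \ee_j)(\ee_i - \ee_j)^\top$, so $\BB_{-S}$ is obtained by restricting this rank-one matrix to the rows and columns indexed by $V \backslash S$. Since $\BB$ is positive semidefinite, every principal submatrix of $\BB$ — in particular $\BB_{-S}$ — is positive semidefinite as well. Hence $\LL_{-S}(\{e\}) \pgeq \LL_{-S}$ in the Loewner order.

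Next I would apply the variational (Courant–Fischer / Rayleigh quotient) characterization of the smallest eigenvalue, exactly as in the proof of Lemma~\ref{lem:lmd}: for any unit vector $\vvv$ of the appropriate dimension,
\begin{equation*}
\vvv^\top \LL_{-S}(\{e\}) \vvv = \vvv^\top \LL_{-S} \vvv + \vvv^\top \BB_{-S} \vvv \geq \vvv^\top \LL_{-S} \vvv \geq \lambda(\LL_{-S}),
\end{equation*}
where the last inequality is the Rayleigh quotient bound for $\LL_{-S}$. Taking the infimum over unit vectors $\vvv$ on the left-hand side gives $\lambda(\{e\}) = \min_{\smallnorm{\vvv}=1} \vvv^\top \LL_{-S}(\{e\}) \vvv \geq \lambda(\LL_{-S})$, which is exactly the claimed inequality $\lambda(\{e\}) \geq \lambda(\emptyset) = \lambda(\LL_{-S})$.

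There is essentially no hard part here; the only points that need a moment of care are (i) the two matrices have the same dimension $n-s$, so the Rayleigh quotients are over the same space and the comparison is legitimate, and (ii) the correctness of the decomposition $\LL_{-S}(\{e\}) = \LL_{-S} + \BB_{-S}$, which holds because deleting the rows/columns in $S$ commutes with adding the edge contribution $\BB$ entrywise. One subtlety worth noting is that the edge $e=(i,j)$ may have one or both endpoints in $S$: if both endpoints lie in $S$, then $\BB_{-S} = \zeov$ and the inequality is trivially an equality; if exactly one endpoint, say $j$, lies in $S$, then $\BB_{-S} = \ee_i \ee_i^\top$ (a diagonal bump at node $i$), still positive semidefinite; and if neither endpoint is in $S$, then $\BB_{-S} = (\ee_i - \ee_j)(\ee_i - \ee_j)^\top$. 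In all three cases $\BB_{-S} \pgeq \zeov$, so the argument goes through uniformly, and the subsequent greedy-algorithm development will only ever use the last two nontrivial cases (since one usually restricts the candidate set to edges with an endpoint in $S$).
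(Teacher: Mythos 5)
Your proof is correct and follows essentially the same route as the paper's (the paper omits the proof, citing Fiedler, but its intended argument is exactly this: the new grounded Laplacian equals the old one plus a positive semidefinite perturbation --- $(\ee_i-\ee_j)(\ee_i-\ee_j)^\top$ restricted to $V\backslash S$, or $\ee_i\ee_i^\top$, or $\zeov$ depending on how many endpoints lie in $S$ --- followed by the Rayleigh-quotient bound). Your uniform treatment via the Loewner order is a slightly cleaner packaging of the same three-case argument; no issues.
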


\iffalse
\begin{proof}
We prove the lemma by distinguishing three cases: (i) $e$ connects two grounded nodes, (ii) $e$ connects two non-grounded nodes, and (iii) $e$ connects a grounded node and a non-grounded node.

For the first case, $\lambda( \{e\})=\lambda(\LL_{-S})$.

For the second case that a new edge $e_1=(i,j)$ linking two non-grounded nodes $i,j \in V\backslash S$ is added to  $\calG$, let $\PP=-\EE_{i,j}-\EE_{j,i}+\EE_{i,i}+\EE_{j,j}$, and define the smallest eigen-pair of the matrix $\LL_{-S}$ and $\LL_{-S}+\PP$ are $(\lambda,\uu)$ and $(\mu,\vvv)$, respectively. Then,
\begin{equation*}
\mu = \vvv^\top (\LL_{- S}+ \PP) \vvv = \vvv^\top \LL_{-S} \vvv+ (\vvv_i-\vvv_j)^2
   \geq \uu^\top \LL_{-S} \uu =\lambda.
\end{equation*}

%Next, for an edge that connects a grounded node and an ungrounded node, we consider two different cases: can grounded nodes isolate other nodes into several disconnected subgraphs or not.

For the third case that a new edge  $e_2=(t,q)$, $t \in V\backslash S, q\in S$ is added to  $\calG$, the grounded Laplacian matrix of graph $\calG( \{e_2\})$ is $\LL_{- S}+\EE_{t,t}$. Define the smallest eigen-pair of the matrix $\LL_{-S}$ and $\LL_{- S}+\EE_{t,t}$ to be $(\lambda,\uu)$ and $(\eta,\ww)$ respectively. Then
\begin{equation*}
\eta = \ww^\top (\LL_{-S}+\EE_{t,t}) \ww = \ww^\top \LL_{-S} \ww+ \ww_t^2
 \geq \uu^\top \LL_{-S} \uu =\lambda,
\end{equation*}
completing the proof.
\end{proof}
\fi

We omit the proof of Lemma~\ref{lem:mon}, since it is similar to that in~\cite{Fi73}.

Lemma~\ref{lem:mon} indicates that adding an edge $e \in (V\times V)\backslash E$ can increase the smallest eigenvalue of the grounded Laplacian matrix. Below we show that the increase of the smallest eigenvalue is not necessarily strict, by distinguishing three cases of $e$.
%as in the proof of Lemma~\ref{lem:mon}.

%\textcolor{blue}{Although Lemma~\ref{lem:mon} establishes the monotonicity of the smallest eigenvalue of the grounded Laplacian matrix, this monotonicity is not always strict. In some cases, adding an edge does not increase the smallest eigenvalue, which is not the desired outcome. Our goal is to maximize the smallest eigenvalue, so we do not want to consider edges that do not affect the smallest eigenvalue, i.e., edges that do not change the eigenvalue when added. To identify such cases, we analyze the conditions for strict monotonicity by distinguishing three cases of $e\in (V\times V)\backslash E$.}

For the first case, when $e$ connects two grounded nodes,  $\lambda(\{e\})=\lambda(\LL_{-S})$.

For the second case that $e$ connects two non-grounded nodes, consider the graph  $\calG$ in Fig.~\ref{fig:nostr}, where node 0 is the grounded node, and nodes 1 to 4 are non-grounded nodes.  The five solid lines are existing edges in $\calG$, and the dashed line $e=(2,3)$ is a candidate edge, which will be added to  $\calG$. After simple computation, we have $\lambda(\{e\})=\lambda(\LL_{-0})=0.1864$, implying that
adding an edge linking two non-grounded does not necessarily lead to a strict increase of the smallest eigenvalue of the grounded Laplacian matrix.

\begin{figure}[!htbp]
    \centering
    \includegraphics[width=0.5\textwidth]{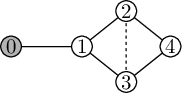}
    \caption{A graph with 5 nodes, with the grey node being the grounded one.}
    \label{fig:nostr}
\end{figure}

%\setlength{\parskip}{0.2cm plus4mm minus5mm}

%\subsection{Add link between followers and a leader}

%Next, for an edge $e=(i,s)$, $i\in V \backslash S, s \in S$, we consider two different cases: can grounded nodes isolate other nodes into several disconnected subgraphs or not.

For the third case that $e$ connects a grounded node and a non-grounded node, we further distinguish two sub-cases:
(i) subgraph $\calG[V\backslash S]$ is connected, and (ii) subgraph $\calG[V\backslash S]$ is disconnected. If  $\calG[V\backslash S]$ is connected, according to the result in~\cite{Fi73}, %from the proof of Lemma~\ref{lem:mon}
we  obtain that the smallest eigenvalue of grounded Laplacian matrix grows strictly.

\iffalse
If the subgraph $\calG[V\backslash S]$ is not connected, then all the elements in $\uu$ are no less than 0. Thus we can obtain the monotonicity of the smallest eigenvalue under this situation, although not strictly, which is
\begin{align*}
\eta & = \ww^\top (\LL_{-S}+\EE_{i,i}) \ww = \ww^\top \LL_{-S} \ww+ \uu_i^2 \\
& \geq \uu^\top \LL_{-S} \uu =\lambda.
\end{align*}
\fi

%To further investigate the monotonicity under the above conditions, w
If  $\calG[V\backslash S]$ is disconnected, suppose that  $\calG[V \backslash S]$ consists of $r>1$ connected subgraphs. Then  $\LL_{-S}$ can be rewritten in  block form as
\begin{equation}
\LL_{-S}=\begin{pmatrix}
\LL_1 & \boldsymbol{0}& \boldsymbol{0}\\
\boldsymbol{0} & \ddots & \boldsymbol{0}\\
 \boldsymbol{0}& \boldsymbol{0}& \LL_r
 \end{pmatrix}.
\end{equation}
It is easy to observe that $\lambda(\LL_{-S})=\min\{\lambda(\LL_1),\dots,\lambda(\LL_r)\}$. According to the above analysis, when we add a new edge $e=(i,t)$ with $i\in V\backslash S$ and $t\in S$, the smallest eigenvalue of the block matrix including node $i$ will strictly increase, while the smallest eigenvalues of other blocks remain unchanged. Thus the addition of edge $e$ will lead to  strict increase of the smallest eigenvalue if and only if the block matrix containing node $i$ has the unique minimum eigenvalue among all blocks.

%From Lemmas~\ref{lem:mon1},~\ref{lem:mon2} and~\ref{lem:mon3}, we can see that as soon as we add new edges to the network, the smallest eigenvalue of the grounded Laplacian matrix will definitely increase. However, only when the new added edge connects a grounded node and a ungrounded node while grounded node set $S$ does not split the rest of the network into several disconnected parts, the smallest eigenvalue will strictly increase.

\subsection{Problem statement}
%Considering only the constraints of edges, we start with the following question of adding $k$ edges to maximize the convergence rate of a social network.
%Consider the case where there are $|S|=s$ leaders in the social network and $|F|=f$ followers, if no follower connects to the leader, its grounded Laplacian matrix degenerates to a Laplacian matrix, whose smallest eigenvalue is 0, and leader's opinion may not spread in the social network, indicating that the convergence rate is 0. And if every followers are connected to every leaders, the convergence rate of this network is $s$. As we will see later, the convergence rate of a social network increases when a new communication link between any two agents is created.

Lemma~\ref{lem:mon} shows that adding an edge will increase  the smallest eigenvalue of the grounded Laplacian matrix. Thus, we put forward the following eigenvalue maximization problem: How to select $k$ edges from a candidate edge set $Q$ to graph $\calG=(V,E)$ with a grounded node set $S$, so that the smallest eigenvalue of the grounded Laplacian matrix generated by set $S$ is maximized. %The specific description of the problem is shown below.
%So we give two problems in this section from two perspectives: maximize the convergence rate when the capacity of the newly added edge set is bounded, and minimize the capacity of the newly added edge set when the convergence rate should be greater than the given threshold.

\begin{tcolorbox}
\begin{problem}\label{prob:1}
Given an undirected unweighted  connected network $\calG=(V,E)$, a grounded node set $S\subset V$ with $|S|\ll n$, a candidate edge set $Q = (V\times V)\backslash E$ consisting of all nonexistent edges, and an integer $k\ll n$, we aim to find an edge set $T\subseteq Q$ with $k$ edges, so that the smallest eigenvalue $\lambda(T)$ for the grounded Laplacian matrix $\LL_{-S}(T)$ of the new graph $\calG(T)=(V,E \cup T)$ is maximized. Mathematically, the problem is formulated as
\begin{equation*}
  	 T^*=\arg\max_{T \subset Q,|T|= k} \lambda(T).
\end{equation*}
\end{problem}
\end{tcolorbox}

%\begin{center}
%\fbox{\parbox{.44\textwidth}{\begin{center}
%\begin{problem}\label{prob:1}
%Given a unweighted, undirected and connected network $\calG=(V,E)$, a grounded node set $S\subset V$, $|S|\ll n$, a candidate edge set $Q = (V\times V)\backslash E$ consisting of all nonexistent edges, and an integer $k\ll n$, we aim to find an edge set $T\subseteq Q$ with $k$ edges, so that for the new graph $\calG(T)=(V,E \cup T)$ and its grounded Laplacian matrix $\LL_{-S}(T)$, its smallest eigenvalue $\lambda(T)$ is maximized. The problem can be formulated as
%\begin{equation*}
%  	 T^*=\arg\max_{T \subset Q,|T|= k} \lambda(T).
%\end{equation*}
%\end{problem}\end{center}
%}}
%\end{center}

When $\calG=(V,E)$ is a small graph, $\lambda(\LL_{-S})$ is large enough for practical applications. In what follows, we focus on large graphs, even those with more than one million nodes. In a large-scale network, if there is an edge between every pair of nodes $u \in S$ and $v \in V\backslash S$, it is easy to verify that $\lambda(\LL_{-S})$ is not less than 1~\cite{GhSr14}, with the lower bound 1 achieved in the star graph whose center is the unique grounded node, which is already sufficiently large for application purposes and thus omitted hereafter. Finally, for a large-scale network where the number of inexistent edges linking nodes $u \in S$ and $v \in V\backslash S$ is smaller than $k$, $\lambda(\LL_{-S})$ is generally large enough for applications~\cite{PiSu16}, which we also do not consider. Therefore, in the sequel, we only treat large graphs, excluding the above two particular cases that are not significant for practical applications.

Problem~\ref{prob:1} is a combinatorial optimization problem. To get its optimal solution, a straightforward approach is to exhaust all $\tbinom{|Q|}{k}$ possible cases for set $T$, calculate the smallest eigenvalue $\lambda(T)$ for each edge set $T$, and then output the optimal solution edge set $T^*$, whose addition to graph $\calG$ leads to the largest smallest eigenvalue $\lambda(T^*)$. The calculation of the smallest eigenvalue takes $O(n^3)$ time for each $T$. For a sparse graph, the size of candidate edge set $Q$ is $O(n^2)$. Thus, the computation complexity is $O\left(\tbinom{|Q|}{k}n^3\right)$, which  is intractable even for small networks.

On the other hand, the objective function $\lambda(\cdot)$ of Problem~\ref{prob:1} is non-submodular as stated in Theorem~\ref{lem:submodular}.

\begin{thm}\label{lem:submodular}
The objective function $\lambda(\cdot):2^{Q}\rightarrow \mathbb{R}$ is a non-submodular function. That is,  there exists two edge sets $A$ and $B$ obeying $A \subseteq B \subseteq Q$ and an edge $e\in Q\backslash B$, which satisfies
\begin{equation}
    \lambda(A\cup \{e\}) -\lambda(A) < \lambda(B\cup \{e\})-\lambda(B).
\end{equation}
\end{thm}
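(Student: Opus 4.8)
The plan is to exhibit an explicit small counterexample — a concrete graph, grounded set, and a witnessing pair $A\subseteq B$ together with an edge $e$ — for which the marginal gain of adding $e$ strictly increases when the larger set $B$ is used as the base. This is the cleanest route, since non-submodularity is an existential statement and a single hand-verifiable instance suffices. Given the earlier reduction observations in the paper, I would work with a candidate edge set of the form $(S\times(V\setminus S))\setminus E$, so that all relevant edge additions attach a non-grounded node to a grounded node and each such addition perturbs $\LL_{-S}$ by a single diagonal bump $\EE_{t,t}$. This keeps the spectral bookkeeping elementary.

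First I would pick a small graph — something on the order of $4$ to $6$ nodes with one or two grounded nodes — and choose it so that $\calG[V\setminus S]$ is connected (so the smallest eigen-pair is simple with a positive eigenvector, by Perron–Frobenius as quoted). The key design principle is that submodularity would require \emph{diminishing} returns: adding $e$ to a richer graph $B$ should help no more than adding it to the poorer graph $A$. To violate this, I want the smallest-eigenvalue direction to be ``concentrated'' at the endpoint of $e$ only after the edges in $B\setminus A$ have been installed — that is, the edges in $B\setminus A$ should redistribute the eigenvector mass so that the coordinate $\uu_t$ at the non-grounded endpoint $t$ of $e$ is \emph{larger} for the graph $B$ than for $A$. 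Since the first-order change in $\lambda$ from adding $e=(t,q)$ is governed by $\uu_t^2$ (from $\vvv^\top(\LL_{-S}+\EE_{t,t})\vvv = \vvv^\top\LL_{-S}\vvv + \vvv_t^2$ evaluated near the optimizer), a configuration where $B\setminus A$ pushes eigenvector weight toward $t$ will produce a larger marginal gain at $B$.

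The concrete steps: (1) fix the graph, $S$, $A$, $B$, $e$; (2) compute $\LL_{-S}$, $\LL_{-S}(A)$, $\LL_{-S}(B)$, $\LL_{-S}(A\cup\{e\})$, $\LL_{-S}(B\cup\{e\})$ — each is a small SDDM matrix with known diagonal perturbations; (3) compute the five smallest eigenvalues, either in closed form (for $2\times 2$ or $3\times 3$ blocks, via the characteristic polynomial) or as exact algebraic numbers; (4) verify the strict inequality $\lambda(A\cup\{e\})-\lambda(A) < \lambda(B\cup\{e\})-\lambda(B)$ numerically to enough precision, then confirm it rigorously by comparing the characteristic polynomials or by exhibiting rational bounds that separate the two differences. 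The cleanest presentation is to give the matrices explicitly and state the four eigenvalues, leaving the reader to check the arithmetic.

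The main obstacle I anticipate is the \emph{search} rather than the verification: most naive small instances will turn out submodular-looking (the eigenvector is usually smoothest on the sparsest graph, so weight at $t$ tends to \emph{decrease} as edges are added, giving diminishing returns). The trick is to engineer $B\setminus A$ so that the added edges ``shield'' other non-grounded nodes — pulling their eigenvector coordinates down and relatively inflating $\uu_t$ — which forces some asymmetry in the graph: $t$ should be a node whose connection to the rest of $V\setminus S$ is weak or indirect, while the edges of $B\setminus A$ strengthen ties among the \emph{other} non-grounded nodes. Once such a configuration is found, I would also double-check that $e\notin B$ and that none of the degenerate cases from the monotonicity discussion (edge inside $S$, or $\calG[V\setminus S]$ disconnected in an inconvenient way) is silently in play. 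A secondary, minor obstacle is making the eigenvalue comparison \emph{rigorous} rather than merely numerical; if closed forms are ugly, I would instead bound $\lambda(\cdot)$ from above and below using Rayleigh quotients at well-chosen test vectors, which suffices to establish the strict inequality.
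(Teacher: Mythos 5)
Your approach is exactly the paper's: the proof there is precisely such an explicit counterexample (an $8$-node path with node $0$ grounded, $A=\emptyset$, $B=\{(0,2)\}$, $e=(0,6)$, verified by direct computation of the four eigenvalues), and your design heuristic---choosing $B\setminus A$ so that eigenvector mass is pushed toward the non-grounded endpoint of $e$---correctly describes why that instance works. The only thing missing from your proposal is the concrete instance itself; the paper settles for a numerical check rather than the rigorous Rayleigh-quotient separation you mention, so your plan is if anything slightly more careful.
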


\begin{proof}
To show its non-submodularity, consider an example of a line graph with 8 nodes in Fig.~\ref{fig:nosub}. Let node 0 be the grounded node, and  let $A$ = $\emptyset$, $B$ = $\{(0,2)\}$ and $e$ = $(0,6)$. After simple computation, we have
\begin{align*}
\lambda(A)=0.0437, & &  \lambda(A\cup \{e\})=0.1837, \\
\lambda(B)=0.0644, & &  \lambda(B\cup \{e\})=0.2544.
\end{align*}
Thus,
\begin{equation*}
\lambda(A\cup \{e\})-\lambda(A)=0.14<0.19=\lambda(B\cup \{e\})-\lambda(B).
\end{equation*}
This finishes the proof.
\end{proof}

\begin{figure}[htbp]
    \centering
    \includegraphics[width=0.8\textwidth]{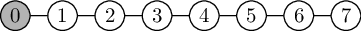}
    \caption{A line graph with 8 nodes and 7 edges. }
    \label{fig:nosub}
\end{figure}
% \begin{proof}
% To show its non-submodularity, consider an example of a line graph with 8 nodes in Fig.~\ref{fig:nosub}. Let node 0 be the grounded node, and  let $A$ = $\emptyset$, $B$ = $\{(0,2)\}$ and $e$ = $(0,6)$. After simple computation, we have
% \begin{align*}
% \lambda(A)=0.0437, & &  \lambda(A\cup \{e\})=0.1837, \\
% \lambda(B)=0.0644, & &  \lambda(B\cup \{e\})=0.2544.
% \end{align*}
% Thus,
% \begin{equation*}
% \lambda(A\cup \{e\})-\lambda(A)=0.14<0.19=\lambda(B\cup \{e\})-\lambda(B).
% \end{equation*}
% This finishes the proof.
% \end{proof}

% \begin{figure}[htbp]
%     \centering
%     \includegraphics[width=0.3\textwidth]{line.eps}
%     \caption{A line graph with 8 nodes and 7 edges. }
%     \label{fig:nosub}
% \end{figure}

%For a combinatorial optimization problem, the submodularity of the objective function can help us design a greedy algorithm by selecting one optimal element in each iteration which has maximum margin benefit, and the solution of this algorithm has a $1-e^{-1}$ approximation ratio~\cite{NeWoFi78}. However, in Problem~\ref{prob:1}, this greedy algorithm has $O(n^5)$ time complexity, and its non-submodularity cannot guarantee a $(1-e^{-1})$-approximation solution. Both of these factors significantly increase the difficulty of the problem.

 %Since $O(\tbinom{n^2}{k})$ cases need to be calculated, this approach is not feasible even on very small networks.

%Such exponential complexity would make exhaust search intractable even in mid-sized networks.

\section{Simplification of the Problem~\ref{prob:1}}

Since for each edge $e \in (S\times S)\backslash E$, $\lambda(\{e\})=\lambda(\LL_{-S})$, indicating that adding $e$ has no influence  on the smallest eigenvalue of the grounded Laplacian matrix. Thus, candidate set $Q=(V\times V)\backslash E$ in Problem~\ref{prob:1} reduces to $(V \times (V \backslash S))\backslash E$, which, however, still has $O(n^2)$ edges. This is one of the challenges for solving Problem~\ref{prob:1}. Below we will show that Problem~\ref{prob:1} can be further simplified by restricting $Q$ to a subset with $O(n)$ edges, without alerting the optimal solution. The core idea is to prune a large portion of insignificant edges in $Q$.

\begin{lem}\label{lem:rep}
Given a graph $\calG=(V,E)$, a grounded node set $S \subset V$,  grounded Laplacian matrix $\LL_{-S}$, a candidate edge set $Q= (V \times V)\backslash E$, for an edge $e=(i,j)\in Q$ with $i,j \in V\backslash S$, there must exist an edge $y=(z,x)\in Q$ with $z\in V\backslash S$ and $x \in S$, satisfying
\begin{equation}
    \lambda(\{y\}) \geq \lambda(\{e\}).
\end{equation}
%three nodes $i,j\in V\backslash S$, $s\in S$, and three edges $e_1=(i,j)$, $e_2=(i,s)$, $e_3=(j,s)$, the smallest eigenvalue of the grounded Laplacian matrix after adding $e_2$ or $e_3$ to the graph $\calG$ will not both be smaller than that after adding $e_1$ to the graph $\calG$ under a certain approximation.
\end{lem}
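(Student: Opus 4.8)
The plan is to start from the edge $e=(i,j)$ with $i,j\in V\backslash S$ and its smallest eigen-pair, and then construct an explicit candidate edge $y=(z,x)$ with $z\in V\backslash S$, $x\in S$ whose addition does at least as well. Let $(\mu,\vvv)$ be the smallest eigen-pair of $\LL_{-S}(\{e\})$, so that $\mu=\vvv^\top\LL_{-S}\vvv+(\vvv_i-\vvv_j)^2$ and $\vvv$ is a unit vector. By Perron--Frobenius we may take $\vvv$ entrywise non-negative. The key idea is that the term $(\vvv_i-\vvv_j)^2$ gained from adding $e$ is dominated by what one can gain by attaching the larger of the two endpoints directly to a grounded node: since $S\neq\emptyset$ and the graph is connected, for the endpoint $t\in\{i,j\}$ with the larger coordinate $\vvv_t=\max\{\vvv_i,\vvv_j\}$, there is at least one grounded node $x\in S$ such that $(t,x)\notin E$ (this uses that we have excluded the degenerate case where every non-grounded node is already adjacent to every grounded node). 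Set $y=(t,x)$, which lies in $Q$.

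The main step is then the Rayleigh-quotient comparison. Adding $y$ changes $\LL_{-S}$ to $\LL_{-S}+\EE_{t,t}$ (only the diagonal entry at $t$ increases, since $x$ is grounded and hence not a coordinate of the submatrix). Plugging the same test vector $\vvv$ into the Rayleigh quotient for $\LL_{-S}+\EE_{t,t}$ gives
\begin{equation*}
\lambda(\{y\}) \le \vvv^\top(\LL_{-S}+\EE_{t,t})\vvv = \vvv^\top\LL_{-S}\vvv + \vvv_t^2.
\end{equation*}
Wait --- this inequality goes the wrong way, so instead I would run the comparison through the eigenvector of $\LL_{-S}+\EE_{t,t}$: let $(\nu,\ww)$ be its smallest eigen-pair with $\ww$ non-negative, and compare $\nu$ directly with $\mu$. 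The cleanest route is to show $\vvv_t^2 \ge (\vvv_i-\vvv_j)^2$, which is immediate because $\vvv_t=\max\{\vvv_i,\vvv_j\}\ge 0$ and $\vvv_i,\vvv_j\ge 0$ imply $|\vvv_i-\vvv_j|\le\max\{\vvv_i,\vvv_j\}=\vvv_t$. Hence
\begin{equation*}
\mu = \vvv^\top\LL_{-S}\vvv + (\vvv_i-\vvv_j)^2 \le \vvv^\top\LL_{-S}\vvv + \vvv_t^2 = \vvv^\top(\LL_{-S}+\EE_{t,t})\vvv.
\end{equation*}
Now the right-hand side is a Rayleigh quotient of $\LL_{-S}+\EE_{t,t}$ at the unit vector $\vvv$, so it is at least the smallest eigenvalue $\lambda(\{y\})$ --- which is again the wrong direction. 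The resolution is to observe that we do not need $\lambda(\{y\})\ge$ that Rayleigh quotient; we need $\lambda(\{y\})\ge\mu=\lambda(\{e\})$. So I would instead feed $\vvv$ into $\LL_{-S}$ alone and argue: $\lambda(\{y\})=\min_{\norm{\zz}=1}\zz^\top(\LL_{-S}+\EE_{t,t})\zz$, and one shows $\zz^\top(\LL_{-S}+\EE_{t,t})\zz\ge\mu$ for every unit $\zz$. For the minimizer $\ww$ of the left side, $\ww^\top(\LL_{-S}+\EE_{t,t})\ww = \ww^\top\LL_{-S}(\{e\})\ww - (\ww_i-\ww_j)^2 + \ww_t^2 \ge \mu - (\ww_i-\ww_j)^2 + \ww_t^2$, where the inequality uses that $\mu$ is the smallest eigenvalue of $\LL_{-S}(\{e\})=\LL_{-S}+\PP$ with $\PP=\EE_{i,i}+\EE_{j,j}-\EE_{i,j}-\EE_{j,i}$. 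Choosing $t$ as the endpoint with the larger $|\ww|$-coordinate among $\{i,j\}$ and using non-negativity of $\ww$ gives $\ww_t^2\ge(\ww_i-\ww_j)^2$, so $\nu=\ww^\top(\LL_{-S}+\EE_{t,t})\ww\ge\mu$, i.e. $\lambda(\{y\})\ge\lambda(\{e\})$.

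The one subtlety --- and the main obstacle --- is that the choice of $t$ (the ``larger'' endpoint) must be made relative to the eigenvector $\ww$ of the target matrix $\LL_{-S}+\EE_{t,t}$, which depends on $t$: this is circular. To break the circularity I would prove the stronger pointwise claim that \emph{for both} choices $t\in\{i,j\}$, and for \emph{every} non-negative unit vector $\zz$, one has $\zz^\top(\LL_{-S}+\EE_{t,t})\zz \ge \zz^\top(\LL_{-S}+\PP)\zz$ whenever $\zz_t\ge\min\{\zz_i,\zz_j\}$ is not enough --- rather, I would split on which of $\zz_i,\zz_j$ is larger and note that for the minimizer $\ww$ of $\LL_{-S}+\EE_{t,t}$ the needed inequality $\ww_t^2\ge(\ww_i-\ww_j)^2$ holds automatically if $t$ is the globally larger of the two \emph{fixed} labels, and handle the other label symmetrically; since at least one of $(i,x)$, $(j,x)$ is a valid non-edge to some $x\in S$, pick whichever endpoint can be connected and verify the inequality holds for that endpoint using $0\le\min\{\ww_i,\ww_j\}$ and $|\ww_i-\ww_j|\le\max\{\ww_i,\ww_j\}$. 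I expect the write-up to reduce to exactly the two-line Rayleigh estimate once the existence of a suitable grounded non-neighbor $x$ (guaranteed by the excluded degenerate cases) and the non-negativity of the Perron eigenvector are invoked.
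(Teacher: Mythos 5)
Your proposal has a genuine gap that you yourself identify but do not actually close: the circular choice of the endpoint $t$. Your exact Rayleigh-quotient computation is correct up to the point
$\nu=\ww^\top(\LL_{-S}+\EE_{t,t})\ww\ \ge\ \mu-(\ww_i-\ww_j)^2+\ww_t^2$,
where $\ww$ is the Perron eigenvector of $\LL_{-S}+\EE_{t,t}$, and everything then hinges on $\ww_t^2\ge(\ww_i-\ww_j)^2$. But this inequality must hold for the eigenvector of the \emph{target} matrix, which depends on $t$, and neither fixed choice of $t$ guarantees it: adding $\EE_{t,t}$ tends to suppress the $t$-th coordinate, so it is entirely possible that $\ww^{(i)}_j>2\,\ww^{(i)}_i$ and simultaneously $\ww^{(j)}_i>2\,\ww^{(j)}_j$, in which case the needed inequality fails for both endpoints. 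Your suggested fix (``$t$ is the globally larger of the two fixed labels'') implicitly assumes the ordering of the coordinates of $i$ and $j$ is preserved under the perturbation, which is not justified. A second, smaller gap: you assume that the chosen endpoint $t$ has some grounded non-neighbor $x$, but it can happen that \emph{both} $i$ and $j$ are already adjacent to every node of $S$ while the excluded degenerate case (every non-grounded node adjacent to every grounded node) does not occur; then no edge $(t,x)$ with $t\in\{i,j\}$ exists and the candidate $y$ must involve a third node.

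It is worth knowing that the paper does not give an exact proof either: it works to first order in perturbation theory, approximating the eigengain of $e=(i,j)$ by $(\uu_i-\uu_j)^2$ and that of a grounding edge at $i$ by $\uu_i^2$, where $\uu$ is the \emph{single, fixed} Perron eigenvector of the unperturbed $\LL_{-S}$. Comparing all candidates against this one reference vector is precisely what avoids your circularity, at the price of the argument being only approximate. The paper also explicitly splits into three cases according to whether $i$, $j$, or both are already adjacent to all grounded nodes, invoking known bounds on the Perron eigenvector to produce a third node $p$ with $\uu_p\ge\max\{\uu_i,\uu_j\}$ admitting a valid candidate edge; your proposal needs an analogous case analysis to cover the second gap. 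If you want to keep your exact Rayleigh-quotient framework, you would need either a monotonicity argument controlling how the coordinates of the Perron vector move under the rank-one perturbation $\EE_{t,t}$, or a different comparison that, like the paper's, is anchored to a vector independent of the candidate being tested.
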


\begin{proof}
Let  $(\lambda,\uu)$ be the smallest eigen-pair of grounded Laplacian matrix $\LL_{-S}$ for graph $\calG=(V,E)$. When we add a new edge $e=(i,j)$, $i,j \in V\backslash S$, to $\calG$,  the grounded Laplacian matrix $\LL_{-S}$ is perturbed with  matrix $\PP= \EE_{i,i}+\EE_{j,j}-\EE_{i,j}-\EE_{j,i}$, and the smallest eigenvalue and eigenvector $\lambda$ and $\uu$ varies, separately, with $\mu$ and $\vvv$. Thus, we obtain
\begin{equation}
(\LL_{-S}+\PP)(\uu + \vvv)=(\lambda+ \mu)(\uu +\vvv).
\end{equation}
Left multiplying $\uu^\top$ on both sides of the above equation gives
\begin{equation}
\mu=\frac{\uu^\top \PP (\uu+\vvv)}{\uu^\top (\uu+\vvv)}.
\end{equation}

According to matrix perturbation theory~\cite{Mi11,Na06}, the small perturbation $\PP$ on the matrix $\LL_{-S}$ has little influence on the eigenvector $\uu$, especially the matrix $\LL_{-S}$ is large scaled matrix, which means $\vvv \approx \boldsymbol{0}$~\cite{HeYaYuZh19,LiCaHeChLiChSu15,MiSuNi10,ReOtHu06}. Hence,
\begin{equation}\label{equ:inc}
 \mu \approx (\uu_i-\uu_j)^2.
\end{equation}

In order to prove the lemma, we distinguish the following three cases.

%In order to find an edge that connects a grounded node and an ungrounded node, whose addition leads to a larger increment of the smallest eigenvalue than in Eq.~(\ref{equ:inc}), we will discuss this in three cases.

 %the increase of the smallest eigenvalue can be approximated by $\delta_e \approx (\uu_i-\uu_j)^2$.

For the first case that there exist edges $e_1=(i, s_1)\in Q$, $e_2=(j, s_2)\in Q$, $s_1, s_2 \in S$,  adding $e_1$ or $e_2$ to  graph $\calG$ will lead to the grounded Laplacian matrix $\LL_{-S}$ perturbed by $\EE_{i,i}$ or $\EE_{j,j}$. Let $\xi$ and $\phi$ denote, respectively, the increment of the smallest eigenvalue $\lambda$ after adding $e_1$ and $e_2$. Using a similar analysis as above, we  obtain
\begin{equation}\label{equ:inc2}
\xi \approx \uu_i^2, \phi \approx \uu_j^2.
\end{equation}
Since $\uu_i\geq 0$ and $\uu_j \geq 0$, $\max\{\uu_i^2,\uu_j^2\}\geq (\uu_i-\uu_j)^2$ holds for any  $e=(i,j)\in Q$ with  $i,j \in V \backslash S$, which indicates that the maximal eigengain induced by  edge $e_1$ or $e_2$ is not less than that of $e$.

\iffalse
If there exist grounded nodes $s_1, s_2 \in S$, and edges $e_1=(i, s_1)\in Q$, $e_2=(j, s_2)\in Q$, then adding edges $e_1$ or $e_2$ to the graph $\calG$, the grounded Laplacian matrix $\LL_{-S}$ is perturbed with a matrix $\EE_{i,i}$ or $\EE_{j,j}$. The increment of the eigenvalues is denoted by $\xi$ and $\phi$, respectively. Using a similar analysis, we can obtain
\begin{equation}
\xi \approx \uu_i^2, \phi \approx \uu_j^2.
\end{equation}
\fi

For the second case only that there still exists an edge $e_2=(j,s_2)\in Q$ (resp.  $e_2=(i,s_2)\in Q$), $s_2\in S$, while node $i$  (resp. $j$) connects  all grounded nodes, according to~\cite{Pi14}, there exists one node $p\in V\backslash S$, and an edge $e_3=(p,s_3)\in Q$, $s_3\in S$, such that $\uu_p \geq \uu_i$ (resp. $\uu_p \geq \uu_j$). It is easy to derive that the increment of the smallest eigenvalue induced $e_3$ can be approximated by $\uu_p^2$.  Hence, $\max\{\uu_p^2,\uu_j^2\}\geq \max\{\uu_i^2,\uu_j^2\}\geq (\uu_i-\uu_j)^2$, indicating that the maximal eigengain of edges $e_2$ and $e_3$ is not smaller than that of $e$.

For the third case that both nodes $i$ and $j$ are connected to all grounded nodes, according to the result in~\cite{Pi14}, there exists one node $p\in V\backslash S$, and an edge $e_4=(q,s_4)\in Q$, $s_4\in S$, such that $\uu_q \geq \max\{\uu_i,\uu_j\}$. Hence, $\uu_q^2\geq (\uu_i-\uu_j)^2$, indicating edge $e_4$ more important than $e$, with respect to increasing the smallest eigenvalue of grounded Laplacian matrix.

The analysis of the above three cases together complete our proof.
\end{proof}

By Lemma~\ref{lem:rep}, there are a large portion of insignificant edges in the candidate edge set $Q$, each of which links two non-grounded nodes. Thus, Problem~\ref{prob:1} can be simplified by pruning insignificant edges, only keeping those edges in $Q$, each of which is connected to a grounded node and a non-grounded node. Also, the non-grounded node of the candidate edge decides the changes in the eigenvalue. This reduces the size of the candidate edge set $Q$ from $O(n^2)$ to $O(n)$. Then we naturally propose a simplified version of Problem~\ref{prob:1}.

\begin{tcolorbox}
\begin{problem}\label{prob:2}
Under the same assumption as Problem~\ref{prob:1} with the exception of $Q = (S\times (V\backslash S))\backslash E $, we aim to find an edge set $T\subseteq Q$ with $k$ edges, so that the smallest eigenvalue $\lambda(T)$ of the grounded Laplacian matrix $\LL_{-S}(T)$ for the new graph $\calG(T)=(V,E \cup T)$ is maximized. The problem is mathematically formulated as
\begin{equation*}
  	 T^*=\arg\max_{T \subset Q,|T|= k} \lambda(T).
\end{equation*}
\end{problem}
\end{tcolorbox}

The only difference between Problems~\ref{prob:1} and~\ref{prob:2} is the candidate edge set $Q$. Theorem~\ref{them:same} provides the connection for optimal solutions between these two Problems.

\begin{thm}\label{them:same}
Any optimal solution to Problem~\ref{prob:2} is also an optimal solution for Problem~\ref{prob:1}.
\end{thm}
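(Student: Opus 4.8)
The plan is to show mutual containment of the two optimal values and then transfer optimal solutions. First I would observe that since the candidate edge set of Problem~\ref{prob:2}, call it $Q_2 = (S\times(V\backslash S))\backslash E$, is a subset of the candidate edge set $Q_1 = (V\times V)\backslash E$ of Problem~\ref{prob:1}, any feasible solution to Problem~\ref{prob:2} is feasible for Problem~\ref{prob:1}; hence the optimal value of Problem~\ref{prob:1} is at least that of Problem~\ref{prob:2}. The real content is the reverse inequality: given an optimal edge set $T^*$ for Problem~\ref{prob:1}, I must produce an edge set $T'\subseteq Q_2$ with $|T'|=k$ and $\lambda(T')\ge\lambda(T^*)$.

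The key step is a swapping/exchange argument built on Lemma~\ref{lem:rep} together with the monotonicity of $\lambda(\cdot)$ (Lemma~\ref{lem:mon}) and its consequence that edges inside $S\times S$ are inert. Concretely, I would take an optimal $T^*$ for Problem~\ref{prob:1} and iteratively eliminate any edge of $T^*$ that lies in $(V\backslash S)\times(V\backslash S)$ (edges inside $S\times S$ can simply be deleted or replaced by any edge of $Q_2$, using inertness and monotonicity, so I focus on the non-grounded--non-grounded edges). For such an edge $e=(i,j)$ with $i,j\in V\backslash S$, I invoke Lemma~\ref{lem:rep}, applied to the current graph (after adding the already-processed edges), to obtain an edge $y=(z,x)$ with $z\in V\backslash S$, $x\in S$ such that replacing $e$ by $y$ does not decrease the smallest eigenvalue. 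Replacing $e$ by $y$ keeps the cardinality at $k$ and strictly decreases the number of non-grounded--non-grounded edges, so after finitely many steps I arrive at a set $T'\subseteq Q_2$ with $\lambda(T')\ge\lambda(T^*)$. Combined with the trivial direction, $\lambda(T')=\lambda(T^*)$, so $T'$ is optimal for Problem~\ref{prob:2}; and conversely, since $T'$ achieves the Problem~\ref{prob:1} optimum while lying in $Q_2$, any Problem~\ref{prob:2}-optimal set also achieves the Problem~\ref{prob:1} optimum, i.e.\ is optimal for Problem~\ref{prob:1}.

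The main obstacle I anticipate is making the exchange argument fully rigorous, because Lemma~\ref{lem:rep} as stated compares $\lambda(\{y\})$ with $\lambda(\{e\})$ relative to the \emph{base} graph $\calG$, using a first-order perturbation approximation $\mu\approx(\uu_i-\uu_j)^2$ rather than an exact inequality, and also implicitly assumes the relevant candidate edges (like $e_1=(i,s_1)$ or $e_3=(p,s_3)$) exist in $Q$ — an assumption that could fail once some edges have already been added. To handle this I would need to argue that Lemma~\ref{lem:rep} applies verbatim to the intermediate graphs $\calG(T^*\cap(\text{processed edges}))$ (which are again connected undirected graphs with the same grounded set $S$), that the existence hypotheses are covered by the standing assumptions in Section~\ref{section:formulation} excluding the two degenerate large-graph cases, and that the perturbation estimate can be upgraded to the exact monotone inequality $\lambda(\cdot\cup\{y\})\ge\lambda(\cdot\cup\{e\})$ needed here. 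A cleaner alternative, if the paper wants to avoid relying on the approximate estimate, would be to process the edges of $T^*$ one at a time in a fixed order and at each stage apply Lemma~\ref{lem:rep} to the current graph, so that only single-edge additions are ever compared; I would expect the authors to take essentially this route, so I would structure the write-up as an induction on the number of non-grounded--non-grounded edges remaining in the working set.
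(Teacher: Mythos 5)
Your proposal matches the paper's proof essentially step for step: the easy direction follows from $Q_2\subseteq Q_1$, and the reverse direction is the same iterated exchange argument that repeatedly applies Lemma~\ref{lem:rep} to the intermediate graph $\calG(T_1\backslash\{e_1\})$ to replace each non-grounded--non-grounded edge by an edge of $Q_2$ without decreasing $\lambda$, terminating after at most $k$ swaps. The caveat you raise about Lemma~\ref{lem:rep} resting on a first-order perturbation approximation rather than an exact inequality applies equally to the paper's own argument, which invokes the lemma without further justification.
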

\begin{proof}
For graph $\calG=(V,E)$ with grounded node set $S\subset V$, let $Q_1=(V\times V)\backslash E$ and $Q_2=(S\times (V\backslash S))\backslash E$, and let $T_1$, and $T_2$ be the optimal solution of  Problem~\ref{prob:1} and Problem~\ref{prob:2}, respectively. Obviously, $\lambda(T_2) \leq \lambda(T_1)$ since $Q_2 \subseteq Q_1$. We next prove $\lambda(T_1) \leq \lambda(T_2)$ by distinguishing two cases: (i) $T_1 \subset Q_2$ and  (ii) $T_1 \subsetneq Q_2$.

For case (i) $T_1 \subset Q_2$, $T_1$ is obviously an optimal solution to Problem~\ref{prob:2}. Thus, $\lambda(T_2)\leq \lambda(T_1) \leq \lambda(T_2)$.

For case (ii) $T_1 \subsetneq Q_2$, there must exist at least one edge $e_1=(i,j)\in T_1$, $i,j \in V\backslash S$.  By Lemma~\ref{lem:rep}, for graph $\calG(T_1\backslash \{e_1\})=(V, E\cup T_1\backslash \{e_1\})$, there exists an edge $r_1=(p,t)\in Q_1 \backslash T_1$, $p\in V\backslash S$, $t \in S$, satisfying $\lambda(T_1\cup\{r_1\}\backslash \{e_1\})\geq \lambda(T_1)$. Define edge set $P_1=T_1\cup\{r_1\}\backslash \{e_1\}$, which is obtained from $T_1$ by replacing edge $e_1 \in T_1$ with edge $r_1$. Then, we have $\lambda(P_1)\geq \lambda(T_1)$. Repeating this replacement at most $k$ times, we get an edge set $P_k\subseteq Q_2$, where no edge is in set $Q_1\backslash Q_2$, satisfying $\lambda(P_k)\geq\lambda(P_1)\geq \lambda(T_1)$. Therefore, we have
\begin{equation}
    \lambda(T_2)\leq \lambda(T_1)\leq \lambda(P_1)\leq \lambda(P_k) \leq \lambda(T_2),
\end{equation}
which finishes the proof.
\end{proof}

To deepen the understanding of Theorem~\ref{them:same}, we present a simple example. Consider the graph in Fig.~\ref{fig:nostr}, where node 0 be the grounded node and the solid lines represent the edges in the original graph. Then, for Problem~\ref{prob:1}, the candidate edge set is $Q_1 = \{(0,2),(0,3),(0,4),(2,3),(1,4)\}$; while for Problem~\ref{prob:2}, the candidate edge set is $Q_2 = \{(0,2),(0,3),(0,4)\}$. We consider the case of $k=2$ for both problems. We calculate the smallest eigenvalue of the grounded Laplacian matrix after adding $2$ edges from each candidate edge set. Below we list the corresponding smallest eigenvalues for all possible sets of edges added.
\begin{align*}
&\lambda(\{(0,2),(0,4)\}) = \lambda(\{(0,3),(0,4)\}) = \lambda(\{(0,2),(0,3)\}) = 1.120,\\
&\lambda(\{(0,2),(2,3)\}) = \lambda(\{(0,3),(2,3)\}) = 1.000, \lambda(\{(0,4),(2,3)\}) = 0.829, \\
&\lambda(\{(0,2),(1,4)\}) = \lambda(\{(0,3),(1,4)\}) = 1.000, \lambda(\{(0,4),(1,4)\}) = 0.824, \\
&\lambda(\{(2,3),(1,4)\}) = 0.697.
\end{align*}
Thus, the optimal solutions for Problem~\ref{prob:2} are the edge sets $\{(0,2),(0,4)\}$, $\{(0,2),(0,3)\}$, or $\{(0,3),(0,4)\}$, which are also the optimal solutions for Problem~\ref{prob:1}.

Now, we can obtain the optimal solution Problem~\ref{prob:1} by exhausting $\tbinom{|Q|}{k}$ possible cases for edge set $T$ in time $O(\tbinom{|Q|}{k}n^3)$. However, this na\"{\i}ve method only works for very small $n$ and $k$.

\section{Simple greedy algorithm}

%Despite the reduction of the candidate edge set $Q$ greatly compresses its size, the na\"{\i}ve method is still intractable in mid-sized network. Also, the example in Fig.~\ref{fig:nosub} shows the objective function is non-submodular even on a smaller candidate edge set. However, the standard greedy algorithm usually has a good performance~\cite{BiBuKrTs17,AbDa11}, and a tight approximation guarantee of $(1-e^{-\gamma\alpha})/\alpha$ is given in~\cite{BiBuKrTs17} where $\gamma$ and $\alpha$ are submodularity ratio and curvature respectively.

%\subsection{Simple Algorithm}

As shown above, for Problem~\ref{prob:1} exhausting possible cases needs an exponential time. In this section, we resort to a simple greedy algorithm to approximately solve Problem~\ref{prob:1}, which is outlined in Algorithm~\ref{alg:1}. Initially, the augmented edge set $T$ is set to be empty. Then $k$ edges are iteratively selected to the augmented edge set from set $ Q\setminus T$. In each iteration, the edge $e$ in candidate set with maximum $\lambda(T\cup \{e\})-\lambda(T)$ denoted by $\lambda_T(e)$ is chosen. The standard greedy algorithm terminates when $k$ edges are chosen to be added to $T$.

%For consistency of structure, we first introduce the greedy algorithm. For a given graph $\calG=(V,E)$, a grounded node set $S \subset V$, a candidate edge set $Q= (S \times (V\backslash S))\backslash E$, and edge set $T\subset Q$, edge $e\in Q\backslash T$, $\lambda(T\cup \{e\})\geq \lambda(T)$ holds true according to its monotonicity. We define $\lambda_T(e)=\lambda(T\cup \{e\})-\lambda(T)$ as the increment of the smallest eigenvalue after adding edge $e$ to the edge set $T$, and $\lambda_T(e)\geq 0$ for any edge $e\in Q\backslash T$. Then the greedy algorithm is outlined in Algorithm~\ref{alg:1}. First, we set the edge set $T$ empty, then we iteratively add an edge from the candidate edge set $Q$ which maximizes $\lambda_T(e)$ to the edge set $T$. The greedy algorithm stops when $k$ edges are added to the edge set $T$.

\begin{algorithm}
	\caption{\textsc{Greedy}$(\calG, k, S)$}
		\label{alg:1}
		\Input{
			A graph $\calG=(V,E)$; an integer $k$; a grounded node set $S$
		}
		\Output{
			$T$: an edge set with $|T| = k$
		}
		Initialize solution $T = \emptyset$ \;
        Initialize candidate edge set $Q= (S \times (V\backslash S))\backslash E$\;
		\For{$i = 1$ to $k$}{
           Compute $\lambda_T(e) = \lambda(T \cup\{e\})-\lambda(T)$ for each $e \in Q$\;
			Select $e$ s.t.  $e \gets \mathrm{arg\, max}_{e \in Q} \lambda_T(e)$ \;
            Update $Q \gets Q \backslash \{e\}$\;
			Update solution $T \gets T\cup \{e\}$ \;
		}
		\Return $T$
\end{algorithm}

%To implement this algorithm, we take $k$ steps, and in each step we calculate the change in the smallest eigenvalue after adding an edge to find the edge that leads to the greatest change of the eigenvalue. The time complexity of calculating the eigenvalue of each edge in line 4 takes $O(nm)$ time and the overall time complexity of the algorithm is $O(knm)$.

% Also, the example in Fig.~\ref{fig:nosub} shows the objective function is non-submodular even on a smaller candidate edge set. However, the standard greedy algorithm usually has a good performance~\cite{BiBuKrTs17,AbDa11}, and a tight approximation guarantee of $(1-e^{-\gamma\alpha})/\alpha$ is given in~\cite{BiBuKrTs17} where $\gamma$ and $\alpha$ are submodularity ratio and curvature respectively.

%The effectiveness and the time complexity of Algorithm~\ref{alg:1} are stated in the following Theorem~\ref{them:appx}.

Although the objective function of Problem~\ref{prob:2} is not submodular, the standard greedy algorithm usually has a good performance~\cite{BiBuKrTs17,AbDa11} with a tight approximation guarantee of $(1-e^{-\gamma\alpha})/\alpha$~\cite{BiBuKrTs17} where $\gamma$ and $\alpha$ are submodularity ratio and curvature, respectively. The effectiveness and the time complexity of Algorithm~\ref{alg:1} are summarized in  Theorem~\ref{them:appx}.

\begin{thm}\label{them:appx}
Let $\gamma \in [0,1]$ and $\alpha \in [0,1]$ be submodularity ratio and curvature of the objective function of  Problem~\ref{prob:2}. Then, Algorithm~\ref{alg:1} runs in time $O(kn^4)$, and edge set $T$ returned by Algorithm~\ref{alg:1} satisfies
%the answer $T=\textsc{Greedy}(\calG,k,S)$ returned by Algorithm~\ref{alg:1} which runs in time $O(kn^4)$ has the following approximation guarantee~\cite{BiBuKrTs17}:
\begin{equation}
\lambda(T) \geq \frac{1}{\alpha}(1-e^{-\alpha\gamma})\lambda(T^*),
\end{equation}
where %$T^*$ is the optimal solution of the problem,
\begin{equation*}
T^* =\arg\max_{T \subseteq Q, |T|=k} \lambda(T).
\end{equation*}
\end{thm}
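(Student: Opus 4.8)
The statement has two independent parts: the running-time bound $O(kn^4)$ and the approximation guarantee $\lambda(T) \ge \tfrac1\alpha(1-e^{-\alpha\gamma})\lambda(T^*)$. The second part I would simply invoke: it is precisely the general guarantee for the greedy algorithm on a monotone non-negative set function with submodularity ratio $\gamma$ and curvature $\alpha$, proved in~\cite{BiBuKrTs17}. So the only thing that genuinely needs to be checked here is that the objective $\lambda(\cdot)$ of Problem~\ref{prob:2} meets the hypotheses of that theorem (non-negativity and monotonicity) and that Algorithm~\ref{alg:1} is exactly the greedy procedure analyzed there. Non-negativity is immediate since $\LL_{-S}(T)$ is an SDDM matrix, hence positive definite, so $\lambda(T)>0$; monotonicity is Lemma~\ref{lem:mon} applied edge by edge. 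Thus the inequality follows directly, and I would present this as a one-paragraph reduction rather than reproving the $(1-e^{-\alpha\gamma})/\alpha$ bound from scratch.

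**Running time.** For the complexity, I would trace the loop structure of Algorithm~\ref{alg:1}. The outer loop runs $k$ times. Inside each iteration, the candidate set $Q=(S\times(V\setminus S))\setminus E$ has $|Q| = O(sn) = O(n)$ edges since $s\ll n$. For each candidate edge $e$, we form the perturbed grounded Laplacian $\LL_{-S}(T\cup\{e\})$, which differs from $\LL_{-S}(T)$ by adding $1$ to a single diagonal entry (the non-grounded endpoint of $e$), and compute its smallest eigenvalue. Computing the smallest eigenvalue of an $(n-s)\times(n-s)$ positive-definite matrix costs $O(n^3)$ by standard methods (e.g. inverse power iteration or a full eigendecomposition; the paper uses the $O(n^3)$ figure it quoted earlier for exact eigenvalue computation). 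Hence each iteration costs $O(|Q|\cdot n^3) = O(n\cdot n^3) = O(n^4)$, and the total is $O(kn^4)$. I would state this as the dominant cost, noting that the bookkeeping operations (updating $Q$ and $T$, selecting the argmax over $O(n)$ values) are lower-order.

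**Main obstacle.** There is no deep obstacle; the one point demanding a little care is making the invocation of the~\cite{BiBuKrTs17} bound airtight. That theorem requires $\gamma$ and $\alpha$ to be exactly the submodularity ratio and curvature \emph{as defined in the excerpt} (which match the cited source's definitions), and requires $f(\emptyset)=0$ or at least that the guarantee be stated relative to $f(T^*)$ in the form given — here $\lambda(\emptyset) = \lambda(\LL_{-S}) > 0$, so one should confirm the cited bound is the version that holds for non-negative monotone functions without the normalization $f(\emptyset)=0$, or else absorb the additive $\lambda(\emptyset)$ term harmlessly (since $\lambda(T)\ge\lambda(\emptyset)$ by monotonicity, any bound of the form $\lambda(T)\ge c\,\lambda(T^*)$ with $c\le 1$ that would follow in the normalized setting only gets easier). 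I would devote one or two sentences to this normalization issue and otherwise keep the proof short: verify monotonicity and non-negativity, cite the ratio/curvature bound, and separately account the $O(kn^4)$ arithmetic.
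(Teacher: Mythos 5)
Your proposal is correct and matches the paper's own proof, which likewise cites~\cite{BiBuKrTs17} for the $(1-e^{-\alpha\gamma})/\alpha$ guarantee without reproving it and obtains $O(kn^4)$ from the same per-iteration accounting ($O(n)$ candidate edges times $O(n^3)$ per eigenvalue computation, over $k$ rounds). Your remark on the normalization $f(\emptyset)=0$ is a point of care the paper itself does not address.
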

\begin{proof}
%The time complexity of Algorithm~\ref{alg:1} is easy to check.
In line 4 of Algorithm~\ref{alg:1}, for each candidate edge $e\in Q$,  calculating the  smallest eigenvalue $\lambda(T \cup \{e\})$ needs $O(n^3)$ time. Since there are $O(n)$ edges in the candidate edge set,  the running time of computing  $\lambda_T(e)$ for all $e\in Q$ is $O(n^4)$. Picking out the edge $e$ with the largest $\lambda_T(e)$ takes $O(n)$ time. Thus, the overall time complexity of the Algorithm~\ref{alg:1} to pick $k$ edges is $O(kn^4)$.

The proof of the ratio $(1-e^{-\gamma\alpha})/\alpha$ for approximation guarantee is omitted, since it is similar to that in~\cite{BiBuKrTs17}.
\end{proof}

%\subsection{Bounds of Submodularity Ratio and Curvature}
%Although the objective function is not submodular, inspired by~\cite{AbDa11,BiBuKrTs17}, we can get the submodularity ratio, and the curvature of the function, to make the subsequent heuristic algorithms have some guarantees.

%Only when the submodularity ratio and curvature of the objective function can be bounded, the greedy Algorithm~\ref{alg:1} and its approximation ratio in Theorem~\ref{them:appx} are meaningful rather than a mere heuristic. However, it is not easy to bound these two quantities. So next in Lemma~\ref{lem:bound}, we give bounds of these two quantities when the subgraph $\calG[V \backslash S]$ is connected, while the bounds in general cases are open to further discussion.

%We will show that the these quantities of the objective function can be bounded when $G[V \backslash S]$ is a connected graph, so the approximation guarantees from theoretical results are applicable. Thus the greedy strategies can be a principled choice rather than a mere heuristic, and an approximation guarantee of the greedy strategy has been given in.

%In next part, we will give the bounds of submodularity ratio, and the curvature of the function, for the convenience of the calculation, we consider the case where $\calG[V\backslash S]$ is a connected graph.

It is difficult to determine the submodularity ratio $\gamma$ and the curvature $\alpha$ for the objective function of Problem~\ref{prob:2} for general cases. However, when graph $\calG[V \backslash S]$ is connected, we can provide a lower bound and an upper bound for $\gamma$ and $\alpha$, respectively.

%\subsubsection{Bounding the submodularity ratio}
\begin{thm}\label{lem:bound}
For a graph $\calG=(V,E)$ with grounded node set $S\subset V$, candidate edge set $Q = (S\times (V\backslash S))\backslash E$, let $\bar{\LL}$ be the Laplacian matrix of its subgraph $\calG[V\backslash S]$, and let $\lambda_2(\bar{\LL})$ be smallest non-zero eigenvalue of $\bar{\LL}$. Then, when $\calG[V\backslash S]$ is connected, for the function $\lambda(\cdot):2^{Q}\rightarrow \mathbb{R}$ of Problem~\ref{prob:2}, the submodularity ratio $\gamma$ is bounded by:
\begin{equation}
\gamma \geq \frac{1}{sn} \left(1-\frac{2s\sqrt{(n-s)}}{\lambda_2(\bar{\LL})}\right)^2,
\end{equation}
and the curvature $\alpha$ is bounded by:
\begin{equation}
\alpha \leq 1-\frac{1}{sn} \left(1-\frac{2s\sqrt{(n-s)}}{\lambda_2(\bar{\LL})}\right)^2.
\end{equation}
\end{thm}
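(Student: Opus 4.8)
The plan is to bound $\gamma$ and $\alpha$ by controlling the marginal gain $\lambda_T(e) = \lambda(T\cup\{e\}) - \lambda(T)$ of adding a single candidate edge $e = (x, i)$ with $x \in S$, $i \in V\setminus S$. As established in the proof of Lemma~\ref{lem:rep}, adding such an edge perturbs $\LL_{-S}(T)$ by $\EE_{i,i}$, so by first-order eigenvalue perturbation the marginal gain is approximately $\uu_i^2$, where $\uu$ is the unit Perron eigenvector of $\LL_{-S}(T)$. More carefully, since $\EE_{i,i} \succeq \zeov$, the variational characterization gives the clean upper bound $\lambda_T(e) \le \uu_i^2 + O(\uu_i^4)$-type control, and a matching lower bound $\lambda_T(e) \ge$ (something like $\uu_i^2/(1+\uu_i^2)$) by testing the Rayleigh quotient of $\LL_{-S}(T)+\EE_{i,i}$ against $\uu$. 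The first key step is therefore to pin down two-sided bounds on $\lambda_T(e)$ in terms of $\uu_i^2$, uniformly over all $T \subseteq Q$ with $|T| \le k$.

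The second step is to control the eigenvector entries $\uu_i$ uniformly. Here the hypothesis that $\calG[V\setminus S]$ is connected enters: I would write $\LL_{-S}(T) = \bar\LL + \DD_S(T)$ where $\bar\LL$ is the Laplacian of $\calG[V\setminus S]$ (which does not change when we add $S$-to-$(V\setminus S)$ edges) and $\DD_S(T) \succeq \zeov$ is the diagonal matrix of boundary degrees. Then $\uu^\top \LL_{-S}(T)\uu = \uu^\top \bar\LL \uu + \uu^\top \DD_S(T)\uu$. Splitting $\uu = c\one/\sqrt{n-s} + \uu^\perp$ into its component along the all-ones vector and orthogonal complement, one has $\uu^\top \bar\LL\uu = (\uu^\perp)^\top \bar\LL \uu^\perp \ge \lambda_2(\bar\LL)\|\uu^\perp\|^2$, while $\lambda(T) = \uu^\top\LL_{-S}(T)\uu$ can be bounded above using Lemma~\ref{lem:lmd} / the structure of $\DD_S(T)$ and the fact that $|S| = s$ forces $\trace{\DD_S(T)}$ to be controlled. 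This should force $\|\uu^\perp\|$ to be small — of order $s\sqrt{n-s}/\lambda_2(\bar\LL)$ — so that every entry of $\uu$ is close to $1/\sqrt{n-s}$, yielding $\uu_i^2 \ge \frac{1}{n-s}\bigl(1 - \tfrac{2s\sqrt{n-s}}{\lambda_2(\bar\LL)}\bigr)^2$ and a comparable upper bound. The appearance of $sn$ (rather than $n-s$) in the denominator of the claimed bound presumably comes from combining this per-entry estimate with the $\frac{1}{s}$-type loss incurred when relating $\gamma$ (a sum over $T\setminus H$) to a single worst marginal gain.

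The third step assembles these into the definitions. For the submodularity ratio: for any $H \subseteq T$, write $\sum_{u\in T\setminus H}(f(H\cup\{u\})-f(H)) \ge \sum_{u}$ (lower bound on marginal) and $f(T) - f(H) \le $ (telescoping sum of upper bounds on marginals along a chain from $H$ to $T$), and the ratio of the per-edge lower to upper bound, divided by the length of the chain (at most $k$, but the bound is phrased with $n$), gives $\gamma$. For the curvature, by its definition I need $f(T) - f(T\setminus\{u\}) \ge (1-\alpha)(f(H)-f(H\setminus\{u\}))$; taking $1-\alpha$ to be the ratio (worst-case lower bound on a marginal gain)/(worst-case upper bound on a marginal gain) does it, which is exactly why the stated $\alpha$ bound is $1$ minus the stated $\gamma$ bound. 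The main obstacle I anticipate is making the perturbation argument for $\lambda_T(e) \approx \uu_i^2$ genuinely rigorous rather than first-order-heuristic (the paper is somewhat cavalier about "$\vvv \approx \zeov$" in Lemma~\ref{lem:rep}), and ensuring the eigenvector-localization bound in step two holds uniformly over all $2^{O(n)}$ choices of $T$ — the uniformity is what lets the same $\gamma, \alpha$ work in the definition, and it hinges on $\bar\LL$ and hence $\lambda_2(\bar\LL)$ being invariant under the edge additions allowed in Problem~\ref{prob:2}.
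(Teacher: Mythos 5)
Your overall strategy is the paper's: express the single-edge marginal gain as (roughly) the squared Perron-eigenvector entry $\uu_i^2$ at the non-grounded endpoint, use connectivity of $\calG[V\backslash S]$ and $\lambda_2(\bar{\LL})$ to show the eigenvector is delocalized so that every entry squared is at least $\frac{1}{n}\bigl(1-\frac{2s\sqrt{n-s}}{\lambda_2(\bar{\LL})}\bigr)^2$, and plug this into the definitions of $\gamma$ and $\alpha$. (The paper simply cites Pirani--Sundaram for the entrywise bound $\uu_{\min}/\uu_{\max}\ge 1-2s\sqrt{n-s}/\lambda_2(\bar\LL)$ rather than re-deriving it via the $\one$-versus-$\uu^\perp$ decomposition you sketch, but that is the same estimate.)

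Two concrete gaps. First, your step 1 gets the hard direction wrong: testing the Rayleigh quotient of $\LL_{-S}(T)+\EE_{i,i}$ against the old eigenvector $\uu$ yields $\lambda(T\cup\{e\})\le\lambda(T)+\uu_i^2$, i.e.\ an \emph{upper} bound on the marginal gain; a single test vector can never lower-bound the smallest eigenvalue of the perturbed matrix. What the theorem needs for both the numerator of $\gamma$ and for $\lambda(T)-\lambda(T\backslash\{u\})$ in the curvature bound is a \emph{lower} bound on the marginal gain. The paper obtains it from the exact identity $\lambda(H\cup\{x\})-\lambda(H)=\uu_{x_i}\uu^{(x)}_{x_i}/(\uu^\top\uu^{(x)})$ (left-multiplying the perturbed eigen-equation by $\uu^\top$), then bounding the numerator below by $(\uu^{(x)}_{\min})^2$ and the denominator above by $1$ via Cauchy--Schwarz; you would need this identity (or a secular-equation argument for the rank-one update) in place of the Rayleigh test. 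Second, your guess for the origin of the factor $1/s$ --- a chain-length loss from telescoping marginals --- is not how it arises and would in fact give a much worse constant, since $|T\backslash H|$ can be as large as $|Q|=O(sn)$ in the definition of $\gamma$. The paper instead bounds the denominator globally: $\lambda(T)-\lambda(H)\le\lambda(Q)-\lambda(\emptyset)\le s$, because passing from $\emptyset$ to $Q$ perturbs $\LL_{-S}$ by a diagonal matrix with entries at most $s$ (each non-grounded node gains at most $s$ edges to grounded nodes), so Weyl's inequality caps the total eigenvalue increase by $s$. With those two repairs your argument aligns with the paper's.
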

\begin{proof}
We first provide a lower bound for $\gamma$. To this end, for
\begin{equation} \label{GammaLow}
\frac{\sum_{x \in T\backslash H}(\lambda(H\cup \{x\})-\lambda(H))}{\lambda(T)-\lambda(H)}, \forall H\subset T \subseteq Q.
\end{equation}
we provide a lower bound and an upper for its numerator and denominator, respectively.

Let $\uu$ be the eigenvector corresponding to $\lambda(H)$ of matrix $\LL_{- S}(H)$, and let $\uu^{(x)}$ be the eigenvector corresponding to $\lambda(H \cup \{x\})$ of matrix $\LL_{- S}(H\cup \{x\})$. Moreover, let $x_i$ be the non-grounded node incident to edge $x$. According to matrix perturbation theory~\cite{HeYaYuZh19,MiSuNi10,ReOtHu06}, we have
\begin{equation}
\lambda(H\cup \{x\}) - \lambda(H) = \frac{\uu_{x_i} \uu_{x_i}^{(x)}}{\uu^\top \uu^{(x)}}.
\end{equation}

For the numerator of~\eqref{GammaLow}, we have
\begin{align}\label{equ:gamnum}
& \sum_{x \in T\backslash H}(\lambda(H\cup \{x\})-\lambda(H)) =  \sum_{x \in T\backslash H} \frac{\uu_{x_i}\uu^{(x)}_{x_i}}{\uu^\top \uu^{(x)}} \nonumber\\
& \geq  \sum_{x \in T\backslash H} \frac{\uu_{x_i}\uu^{(x)}_{x_i}}{\norm{\uu}_2\smallnorm{\uu^{(x)}}_2}  \geq  \sum_{x \in T\backslash H} (\uu^{(x)}_{\min})^2 \nonumber\\
& = \sum_{x \in T\backslash H} \frac{(\uu^{(x)}_{\min})^2}{(\uu^{(x)}_{\max})^2} (\uu^{(x)}_{\max})^2
\geq \frac{1}{n} \left(1-\frac{2s\sqrt{(n-s)}}{\lambda_2(\bar{\LL})}\right)^2,
\end{align}
where $\uu^{(x)}_{\max}$ and $\uu^{(x)}_{\min}$ are, respectively, the largest and the smallest components of vector $\uu^{(x)}$, and the last inequality is established  based on the results in~\cite{PiShSu15,Pi14}.

For the denominator of~\eqref{GammaLow}, we have
\begin{equation}\label{equ:gamden}
\lambda(T)-\lambda(H) \leq  \lambda(Q) -\lambda(\emptyset) \leq s.
\end{equation}

Combining~\eqref{equ:gamnum} and~\eqref{equ:gamden} leads to
\begin{equation}
\gamma \geq \frac{1}{sn} \left(1-\frac{2s\sqrt{(n-s)}}{\lambda_2(\bar{\LL})}\right)^2\,.
\end{equation}
%which completes the first part.

In order to give an upper bound for $\alpha$, we provide, respectively, a lower bound and an upper for the numerator and denominator of
$\frac{\lambda(T)-\lambda(T  \backslash \{u\})}{\lambda(H)-\lambda(H \backslash \{u\})}$, $\forall H\subseteq T \subseteq Q, u \in H$.
Utilizing an approach similar to~\eqref{equ:gamnum} and~\eqref{equ:gamden}, we obtain
\begin{equation}\label{equ:alpnum}
\lambda(T)-f(T \backslash \{u\}) \geq  \frac{1}{n} \left(1-\frac{2s\sqrt{(n-s)}}{\lambda_2(\bar{\LL})}\right)^2
\end{equation}
and
\begin{equation}\label{equ:alpden}
\lambda(H)-\lambda(H \backslash \{u\}) \leq \lambda(Q) -\lambda(\emptyset) \leq s,
\end{equation}
both of which lead to
\begin{equation}
\alpha \leq 1-\frac{1}{sn} \left(1-\frac{2s\sqrt{(n-s)}}{\lambda_2(\bar{\LL})}\right)^2.
\end{equation}
This completes the proof.
\end{proof}

Note that the approximation ratio $(1-e^{-\gamma\alpha})/\alpha$ characterizes the theoretical performance of the greedy algorithm, with larger $(1-e^{-\gamma\alpha})/\alpha$ corresponding to better performance. On the other hand, the approximation ratio $(1-e^{-\gamma\alpha})/\alpha$ depends on $\alpha$ and $\gamma$. It is desirable that $\alpha$ is small and $\gamma$ is large. Theorem~\ref{lem:bound} indicates that the bounds of $\gamma$ and $\alpha$ are related to $\lambda_2(\bar{\LL})$. Concretely, a larger $\lambda_2(\bar{\LL})$ implies a larger lower bound of $\gamma$ and a smaller upper bound of $\alpha$. Since denser graphs usually have a larger $\lambda_2(\bar{\LL})$,  a tighter bound for approximation ratio can be achieved on denser graphs.

\section{Fast greedy algorithm}

%To solve the problem, we put forward three methods from different insights.

The complexity of the simple greedy algorithm~\ref{alg:1} has been significantly reduced compared with the exhaustive search method, but it is still not applicable to large networks since it takes too much time to calculate $\lambda_T(e)$ for each $e\in Q$ in each iteration. On the other hand, although the approximation for $\lambda_T(e)$ in the proof of Lemma~\ref{lem:rep} is enough for proving the Lemma, it is not suitable for estimating the importance of two edges linking non-grounded nodes and grounded nodes, since the difference for their eigengains  may be small that cannot be discriminated by this rough approximation. To overcome this issue, we use an alternative way to evaluate the upper bound for the eigengain of an edge, based on which we further give a fast approximation algorithm to solve Problem~\ref{prob:2} in time $\tilde{O}(km)$. Note that the idea of this bound estimation method has been previously used in the literature~\cite{MoBaZhPe20,ToChToEl21}.

%Although the complexity of the simple greedy algorithm~\ref{alg:1} has been significantly reduced compared with the na\"{\i}ve method, however, the solution is still not fast for large-scale networks since it takes too much time to calculate $\lambda_T(e)$ for each $e\in Q$ in each iteration. And the approximation mentioned in Lemma~\ref{lem:rep} is a rough estimate since getting the increment accurately requires the knowledge of eigenvector of $\LL_{-S}(T\cup\{e\})$. So in this section, we will approximate a upper bound of $\lambda_T(e)$, and the upper bound approximation has been used in another eigengap approximation~\cite{ToChToEl21}. Then we give a fast approximation algorithm to Problem~\ref{prob:2} with time complexity $\tilde{O}(km)$.

\iffalse
In order to get the smallest eigenvalue and eigenvector of the grounded Laplacian matrix with a low computational cost, we first introduce the method mentioned in~\cite{BaSpSrTe13,SpTe14,CoKyMiPaJaPeRaXu14}, and the performance of this method is given in the following lemma.

\begin{lem}
Given a grounded Laplacian matrix $\LL_{-S}$ with $m$ nonzero elements and smallest eigenvalue $\lambda_{\min}(\LL_{-S})$, there exists a nearly-linear time algorithm $\uu=\textsc{Vector}(\LL_{-S},\epsilon)$ that outputs a vector $\uu$ such that $\lambda=\uu^\top \LL_{-S} \uu \leq (1+\epsilon)\lambda_{\min}(\LL_{-S})$. The algorithm takes $\tilde{O}(m)$ time, where $\tilde{O}(\cdot)$ hides the ${\rm poly} (\log n)$ factor.
\end{lem}
\fi

We next provide an upper bound of $\lambda_T(e)$ for an edge in the candidate edge $Q$ of Problem~\ref{prob:2}. For an edge $e=(t,i)\in Q$ that connects a grounded node $t\in S$ and a non-grounded node $i \in V\backslash S$, according to Cauchy’s interlacing theorem~\cite{PaBa10}, we have
\begin{equation*}
\lambda(\LL_{-S}(T)) \leq \lambda(\LL_{-S}(T)+\EE_{i,i}) \leq \lambda((\LL_{-S}(T))_{-i}).
\end{equation*}
Define $\bar{\lambda}_T(e)=\lambda((\LL_{-S}(T)_{-i})-\lambda(\LL_{-S}(T))\geq \lambda_T(e)$, then $\bar{\lambda}_T(e)$ is an upper bound of $\lambda_T(e)$.

To bridge the gap between matrix  $\LL_{-S}(T)$ and matrix $\LL_{-S}(T)_{-i}$, we introduce the following matrix $\MM_i$,
\begin{equation*}
 \MM_i=\begin{pmatrix}
(\LL_{-S}(T))_{-i} & \boldsymbol{0} \\
\boldsymbol{0} & (\LL_{-S}(T))_{i,i}
\end{pmatrix}.
\end{equation*}
If there exists one node $j\in V\backslash S$ whose degree is not larger than the degree of node $i$, then by Lemma~\ref{lem:lmd} the smallest eigenvalue $\lambda(\MM_i)$ of matrix $\MM_i$ is equal to $\lambda((\LL_{\calG}(T))_{-i})$.
Thus, as long as  $i$ is not the unique node in  $V\backslash S$ with the smallest degree,  $\lambda(\MM_i)=\lambda((\LL_{\calG}(T))_{-i})$ always holds. In other words, there exists at most one such node $i$, whose degree is smaller than all other nodes. This situation is almost negligible, since adding a new edge from a grounded node to the node with smallest degree is often not a good choice. Thus, with a probability of at least $1-|S|/|Q|$, $\bar{\lambda}_T(e)=\lambda(\MM_i)-\lambda(\LL_{-S}(T))$ holds. Below, we use $\lambda(\MM_i)-\lambda(\LL_{-S}(T))$ instead of $\bar{\lambda}_T(e)$,  as an upper bound of $\lambda_T(e)$.

The following Lemma provides an approximation of $\lambda(\MM_i)-\lambda(\LL_{-S}(T))$ by using a partial derivative method~\cite{YiSh2018,SiBoBaMo18,KaTo19}.
\begin{lem}\label{lem:fin}
Given a connected graph $\calG=(V,E)$, a grounded node set $S\subset V$, a  candidate edge set $Q = (S\times (V\backslash S))\backslash E$, and $T \subseteq Q$, let $\LL_{-S}(T)$ be the grounded Laplacian matrix of graph $\calG(T)=(V,E \cup T)$, and let  $(\lambda,\uu)$ be the smallest eigen-pair of matrix $\LL_{-S}(T)$. Then, for any node $i\in V\backslash S$, we have the following approximation:
%For a connected graph $\calG=(V,E)$, a grounded node set $S$, a candidate edge set $Q = (S\times (V\backslash S))\backslash E$, $T \subseteq Q$, the graph $\calG(T)=(V,E \cup T)$ is obtained by adding edges from edge set $T$ to the graph $\calG$. Its grounded Laplacian matrix is $\LL_{-S}(T)$, and the smallest eigen-pair of it is defined as $(\lambda,\uu)$. For any node $i\in V\backslash S$, we have the following approximation:
\begin{equation}\label{eigengain}
\lambda(\MM_i)-\lambda(\LL_{-S}(T)) \approx 2\uu_i\sum_{j\in \calN_i\backslash S}\uu_j =\tilde{\lambda}_T(e).
\end{equation}
\end{lem}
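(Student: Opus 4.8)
The plan is to treat $\MM_i$ as an additive, low-rank perturbation of $\LL_{-S}(T)$ and to apply first-order eigenvalue perturbation theory, mirroring the argument already used for Lemma~\ref{lem:rep}.

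First I would make the perturbation explicit. Since every edge of $T$ joins a grounded node to a non-grounded node, the non-grounded neighbours of $i$ in $\calG(T)$ coincide with those in $\calG$, and the row of $\LL_{-S}(T)$ indexed by $i$ has off-diagonal entries $(\LL_{-S}(T))_{i,j}=(\LL_{-S}(T))_{j,i}=-1$ for each $j\in\calN_i\setminus S$. The block matrix $\MM_i$ zeroes exactly these entries while keeping the diagonal entry $(\LL_{-S}(T))_{i,i}=d_i$ and every other entry unchanged. Hence
\begin{equation*}
\MM_i=\LL_{-S}(T)+\PP,\qquad \PP=\sum_{j\in\calN_i\setminus S}\bigl(\EE_{i,j}+\EE_{j,i}\bigr)=\ee_i\bb^\top+\bb\,\ee_i^\top,
\end{equation*}
where $\bb=\sum_{j\in\calN_i\setminus S}\ee_j$; in particular $\PP$ has rank at most two.

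Next I would run the perturbation expansion. Let $(\lambda,\uu)$ be the smallest eigen-pair of $\LL_{-S}(T)$, so $\uu^\top\uu=1$ and, by the Perron--Frobenius property, $\uu\ge\boldsymbol{0}$; let $(\lambda+\mu,\uu+\vvv)$ be the smallest eigen-pair of $\MM_i$, so that $\mu=\lambda(\MM_i)-\lambda(\LL_{-S}(T))$ is the quantity of interest. Starting from $(\LL_{-S}(T)+\PP)(\uu+\vvv)=(\lambda+\mu)(\uu+\vvv)$, left-multiplying by $\uu^\top$, using $\uu^\top\LL_{-S}(T)=\lambda\uu^\top$, and cancelling the common term $\lambda\,\uu^\top(\uu+\vvv)$, we obtain
\begin{equation*}
\mu=\frac{\uu^\top\PP(\uu+\vvv)}{\uu^\top(\uu+\vvv)}.
\end{equation*}
Because $\PP$ is a small, low-rank perturbation of the large matrix $\LL_{-S}(T)$, matrix perturbation theory gives $\vvv\approx\boldsymbol{0}$, whence $\mu\approx\uu^\top\PP\uu$, and expanding
\begin{equation*}
\uu^\top\PP\uu=\sum_{j\in\calN_i\setminus S}\uu^\top(\EE_{i,j}+\EE_{j,i})\uu=\sum_{j\in\calN_i\setminus S}2\,\uu_i\uu_j=2\,\uu_i\!\!\sum_{j\in\calN_i\setminus S}\!\!\uu_j
\end{equation*}
yields \eqref{eigengain}. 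Equivalently, one may read $\uu_i\uu_j$ off as the partial derivative of the (simple) smallest eigenvalue of $\LL_{-S}(T)$ with respect to the entry in position $(i,j)$, and sum the first-order contributions of the $|\calN_i\setminus S|$ perturbed entry-pairs, which is the ``partial derivative method'' referred to in the statement.

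The step that is not rigorous, and hence the main obstacle, is the passage $\vvv\approx\boldsymbol{0}$: it discards the second-order term $\uu^\top\PP\vvv$ and the normalisation correction $\uu^\top\vvv$. It is justified only heuristically, by the smallness of $\PP$ relative to the spectral gap of $\LL_{-S}(T)$ — the same approximation already invoked in the proof of Lemma~\ref{lem:rep} and standard in the eigenvalue-optimisation literature cited in the paper; a fully rigorous version would instead bound $\|\vvv\|_2$ (for instance via a resolvent or Davis--Kahan estimate) and carry an explicit $O(\|\PP\|_2^2)$ error term. Since $\tilde\lambda_T(e)$ is needed only as a cheap surrogate for ranking candidate edges in the fast algorithm, I would keep the approximate form \eqref{eigengain}.
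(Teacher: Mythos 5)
Your proposal is correct and follows essentially the same route as the paper: the paper computes the entrywise partial derivatives $\partial\lambda/\partial(\LL_{-S}(T))_{i,j}=\uu_i\uu_j$ and sums them over the zeroed off-diagonal entries of row and column $i$, which is exactly your first-order expansion $\mu\approx\uu^\top\PP\uu$ with $\PP=\sum_{j\in\calN_i\setminus S}(\EE_{i,j}+\EE_{j,i})$. Your version is slightly more explicit about the perturbation matrix and about where the non-rigorous step ($\vvv\approx\boldsymbol{0}$) enters, but the substance is identical.
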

\begin{proof}
Notice that matrix $\MM_i$ is obtained from matrix $\LL_{-S}(T)$ by deleting the non-diagonal entries in the row and column corresponding to node $i$. To evaluate $\lambda(\MM_i)-\lambda(\LL_{-S}(T))$, we use the partial derivative~\cite{YiSh2018,SiBoBaMo18,KaTo19} to measure the change of the smallest eigenvalue $\lambda(\LL_{-S}(T))$, caused by a nonzero non-diagonal entry at row $i$ or column $i$.

By definition of eigenvalue, we have $\lambda \LL_{-S}(T)=\lambda \uu$. For any $j\neq i$ and $\LL_{-S}(T)_{i,j}\neq 0$, performing the derivative of both sides of  $\lambda \LL_{-S}(T)=\lambda \uu$ with respect to $\LL_{-S}(T)_{i,j}$ yields
\begin{equation*}
    \frac{\partial \LL_{-S}(T)}{\partial \LL_{-S}(T)_{i,j}} \uu+ \LL_{-S}(T)\frac{\partial \uu}{\partial \LL_{-S}(T)_{i,j}}=\frac{\partial \lambda}{\partial \LL_{-S}(T)_{i,j}} \uu+ \lambda\frac{\partial \uu}{\partial \LL_{-S}(T)_{i,j}}.
\end{equation*}
Left multiplying $\uu^\top$ on both sides, we  obtain
\begin{equation*}
    \frac{\partial \lambda}{\partial \LL_{-S}(T)_{i,j}}=\uu^\top \frac{\partial \LL_{-S}(T)}{\partial \LL_{-S}(T)_{i,j}}\uu = \uu_i\uu_j.
\end{equation*}
%Similarly, we have
%\begin{equation*}
%    \frac{\partial \lambda}{\partial \LL_{-S}(T)_{j,i}}=\uu^\top \frac{\partial \LL_{-S}(T)}{\partial \LL_{-S}(T)_{j,i}}\uu = \uu_j\uu_i.
%\end{equation*}
Adding up all the partial derivative on non-diagonal nonzero element at $i$-th row and $i$-th column of matrix $\LL_{-S}(T)$, we get an approximation of
\begin{equation*}
\lambda(\MM_i)-\lambda(\LL_{-S}(T)) \approx     2\uu_i\sum_{j\in \calN_i\backslash S}\uu_j =\tilde{\lambda}_T(e),
\end{equation*}
which is exactly~\eqref{eigengain}.
%to denote the derivative of the smallest eigenvalue from matrix $\LL_{-S}(T)$ to matrix $\MM_i$. Thus the eigengap between $\lambda(\MM_i)$ and $\lambda(\LL_{-S}(T))$ can be approximated by $\tilde{\lambda}_T(e)$.
\end{proof}

%the strong correlation between $\lambda_T(e)$ and $\tilde{\lambda}_T(e)$ later in the Experiment Section.
%
% We define $\Delta \MM = \sum_{j\in \calN_i \backslash S} \EE_{i,j}+\EE_{j,i}$, and $\MM_i=\LL_{-S}(T)+\Delta \MM$. We let $(\lambda(\LL_{-S}(T)),\uu)$ be the eigen-pair of $\LL_{-S}(T)$ and it varies with $\Delta \lambda$ and $\Delta \uu$ after $\LL_{-S}(T)$ is perturbed with $\Delta \MM$, then we have
%\begin{equation}
%\Delta \lambda = \frac{\uu^\top \Delta \MM (\uu+\Delta \uu)}{\uu^\top (\uu+\Delta \uu)} \approx 2 \uu_i\sum_{j\in \calN_i \backslash S} \uu_j=\tilde{\lambda}_T(i).
%\end{equation}
%
%Thus for each new added edge connects to a follower $i\in F$, we calculate $\tilde{\lambda}_T(i)\approx \bar{\lambda}_T(i)$ to replace the original approximation.

%\subsection{Fast Algorithm}

Lemma~\ref{lem:fin} and the above analysis show that finding the maximum $\lambda_T(e)$ is reduced to maximizing its upper bound $\bar{\lambda}_T(e)$, which can be approximated by $\tilde{\lambda}_T(e)$ for each non-existing edge. Then, to compute $\tilde{\lambda}_T(e)$, the only thing left is to evaluate the smallest eigen-pair of the grounded Laplacian matrix $\LL_{-S}$ with a low computational cost. Fortunately, this can be solved by the method  in~\cite{BaSpSrTe13,SpTe14,CoKyMiPaJaPeRaXu14}, as given in the following lemma.

%Finally, there is an important issue that the computational expense of the smallest eigenvector is too large. In order to get the smallest eigenvalue and eigenvector of the grounded Laplacian matrix $\LL_{-S}$ with a low computational cost, we introduce the method mentioned in~\cite{BaSpSrTe13,SpTe14,CoKyMiPaJaPeRaXu14}, and the performance of this method is given in the following lemma.

%First, we show the calculation of the smallest eigen-pair of the grounded Laplacian matrix $\LL_{-S}$ with time complexity of $\tilde{O}(m)$.

\begin{lem}\label{LapSolver}
Given a grounded Laplacian matrix $\LL_{-S}$ with smallest eigenvalue $\lambda(\LL_{-S})$, there exists a nearly-linear time algorithm $\uu=\textsc{Vector}(\LL_{-S},\epsilon)$ with the output  vector $\uu$ and $\uu^\top \LL_{-S} \uu$  satisfying $\uu^\top \LL_{-S} \uu \leq (1+\epsilon)\lambda(\LL_{-S})$. The algorithm takes $\tilde{O}(m)$ time, where $\tilde{O}(\cdot)$ hides the ${\rm poly} (\log n)$ factor.
\end{lem}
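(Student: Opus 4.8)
The plan is to realize $\textsc{Vector}$ as an inverse power iteration whose only expensive operation is solving a linear system in $\LL_{-S}$, and then to invoke the nearly-linear-time SDDM solvers of~\cite{BaSpSrTe13,SpTe14,CoKyMiPaJaPeRaXu14}. Since $\LL_{-S}$ is an SDDM matrix, those solvers give: for any right-hand side $\bb$ and any $\delta>0$, a vector $\tilde{\xx}$ with $\norm{\tilde{\xx}-\LL_{-S}^{-1}\bb}_{\LL_{-S}}\le\delta\norm{\LL_{-S}^{-1}\bb}_{\LL_{-S}}$ in time $\tilde{O}(m\log(1/\delta))$. Accordingly, $\textsc{Vector}$ draws a random unit vector $\yy_0$, repeatedly sets $\yy_t$ to a $\delta$-accurate solve of $\LL_{-S}\yy_t=\yy_{t-1}$ followed by renormalization, for $t=1,\dots,q$, and returns $\uu=\yy_q/\norm{\yy_q}_2$.

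First I would carry out the exact (error-free) analysis. Expand $\yy_0=\sum_i c_i\vvv_i$ in an orthonormal eigenbasis $\{\vvv_i\}$ of $\LL_{-S}$ with $0<\lambda=\lambda_1\le\lambda_2\le\cdots\le\lambda_{\max}$; the exact iterate is proportional to $\sum_i c_i\lambda_i^{-q}\vvv_i$. A random start ensures $c_1^2=\Omega(1/n)$ with constant probability, and the standard power-method estimate — split the spectrum at $(1+\eta)\lambda$ — gives, for $q=O(\eta^{-1}\log(n/\eta))$, that $\uu^\top\LL_{-S}^{-1}\uu\ge(1-\eta)\lambda^{-1}$. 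Writing $\uu=\sum_i b_i\vvv_i$ with $\sum_i b_i^2=1$, this is equivalent to $\sum_{i\ge2}b_i^2(1-\lambda/\lambda_i)=1-\lambda\,\uu^\top\LL_{-S}^{-1}\uu\le\eta$, whence
\begin{equation*}
\uu^\top\LL_{-S}\uu=\lambda+\sum_{i\ge2}b_i^2\lambda_i\Bigl(1-\tfrac{\lambda}{\lambda_i}\Bigr)\le\lambda+\lambda_{\max}\sum_{i\ge2}b_i^2\Bigl(1-\tfrac{\lambda}{\lambda_i}\Bigr)\le\lambda+\eta\,\lambda_{\max}.
\end{equation*}
By Gershgorin's theorem applied to $\LL_{-S}$ one has $\lambda_{\max}\le2\max_i d_i\le2n$, so taking $\eta=\epsilon\lambda/(2n)$ yields $\uu^\top\LL_{-S}\uu\le(1+\epsilon)\lambda$, the claimed guarantee. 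The inexactness of the solver is then absorbed by a routine perturbation argument: since only $q$ solves are performed and each is accurate to relative $\LL_{-S}$-norm $\delta$, choosing $\delta=\mathrm{poly}(\eta/n)$ preserves the Rayleigh-quotient bound up to a constant, and repeating the whole procedure $O(\log n)$ times and keeping the iterate with the smallest Rayleigh quotient boosts the success probability to $1-n^{-\Omega(1)}$. The cost is $q$ solver calls, i.e. $q\cdot\tilde{O}(m\log(1/\delta))=\tilde{O}(m)$ once the $\mathrm{poly}(\log n)$ and $\mathrm{poly}(1/\epsilon)$ factors are folded into $\tilde{O}(\cdot)$.

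The step I expect to be the main obstacle — and the one I would treat most carefully — is controlling the number of iterations $q$. Substituting $\eta=\epsilon\lambda/(2n)$ into $q=O(\eta^{-1}\log(n/\eta))$ makes $q$ grow with the ratio $\lambda_{\max}/\lambda(\LL_{-S})$, i.e. with the conditioning of $\LL_{-S}$; so the $\tilde{O}(m)$ statement should be understood as nearly-linear in $m$ with $\tilde{O}(\cdot)$ additionally suppressing a factor polynomial in $1/\epsilon$ and in the spectral range of $\LL_{-S}$ (which is benign for the instances of interest, where $\lambda(\LL_{-S})$ is inverse-polynomially bounded and the degenerate regimes of Section~\ref{section:formulation} are excluded), or else one replaces plain power iteration by a Chebyshev/Lanczos acceleration — reducing the dependence from the condition number $\kappa$ to $\sqrt{\kappa}$ — and cites~\cite{BaSpSrTe13,SpTe14,CoKyMiPaJaPeRaXu14} for the refined running-time bound.
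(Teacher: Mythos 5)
The paper gives no proof of this lemma at all --- it is imported as a black box from the Laplacian-solver literature \cite{BaSpSrTe13,SpTe14,CoKyMiPaJaPeRaXu14} --- so you are supplying an argument where the authors supply only a citation. Your algorithm (inverse power iteration from a random start, each step realized by a nearly-linear-time SDDM solve) is the right one, and your treatment of the random initialization and of the solver's inexactness is standard. The problem is the middle of your analysis, and it leads you to a wrong conclusion about the running time. You first prove $\uu^\top\LL_{-S}^{-1}\uu\ge(1-\eta)\lambda^{-1}$ and then convert this into a bound on $\uu^\top\LL_{-S}\uu$; as your own computation shows, that conversion costs a factor of $\lambda_{\max}$, which forces $\eta=\epsilon\lambda/(2n)$ and hence an iteration count proportional to the condition number of $\LL_{-S}$. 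From this you conclude that the lemma's $\tilde{O}(m)$ must secretly suppress a factor polynomial in the spectral range. That caveat is not needed: it is your detour through the inverse Rayleigh quotient that is lossy, not the algorithm.

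The standard fix (this is precisely Spielman--Teng's analysis of approximate Fiedler-vector computation) bounds $\uu^\top\LL_{-S}\uu$ directly for the inverse-power iterate. Writing the exact iterate as $\yy_q\propto\sum_i c_i\lambda_i^{-q}\vvv_i$, its Rayleigh quotient is $\sum_i p_i\lambda_i$ with $p_i\propto c_i^2\lambda_i^{-2q}$, and for every index with $\lambda_i>(1+\epsilon/2)\lambda$ one has
\begin{equation*}
p_i\lambda_i\le\frac{c_i^2}{c_1^2}\,\lambda\left(\frac{\lambda}{\lambda_i}\right)^{2q-1}\le\frac{c_i^2}{c_1^2}\,\lambda\,(1+\epsilon/2)^{-(2q-1)},
\end{equation*}
i.e.\ the weight $\lambda_i^{-(2q-1)}$ in the numerator annihilates the single factor of $\lambda_i$ that you were forced to bound by $\lambda_{\max}$. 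Summing over $i$ and using $c_1^2\ge 1/\mathrm{poly}(n)$ from the random start, $q=O(\epsilon^{-1}\log(n/\epsilon))$ exact iterations already give $\uu^\top\LL_{-S}\uu\le(1+\epsilon)\lambda$ with no dependence whatsoever on $\lambda_{\max}/\lambda$, and with your perturbation argument for the inexact solves (now with $\delta=\mathrm{poly}(\epsilon/n)$, so $\log(1/\delta)=O(\log(n/\epsilon))$) the total cost is $\tilde{O}(m)$ as claimed, with only $\mathrm{poly}(\log n)$ and $1/\epsilon$ factors hidden. You should replace your conversion step by this direct bound and delete the final paragraph's condition-number hedge.
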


Note that except for the  approximation eigenvector $\uu$ of matrix $\LL_{-S}$, the algorithm $\textsc{Vector}(\LL_{-S},\epsilon)$ in Lemma~\ref{LapSolver} can be also used to approximate $\lambda(\LL_{-S})$ by $\uu^\top \LL_{-S} \uu$.

Next we are in a position to propose a fast approximation algorithm for Problem~\ref{prob:2}, the pseudo-code of which is outlined in Algorithm~\ref{alg:2}, and the efficiency of which is stated in Theorem~\ref{thm:time2}.

\begin{algorithm}[htbp]
	\caption{\textsc{Fast}$(\calG, k,S,\epsilon)$}
		\label{alg:2}
		\Input{
			A graph $\calG=(V,E)$; an integer $k$; a grounded node set $S$; an error parameter $\epsilon$
		}
		\Output{
			$T$: an edge set with $|T| = k$
		}
		Initialize solution $T = \emptyset$ \;
        Initialize candidate edge set $Q= (S \times (V\backslash S))\backslash E$ \;
	
		\For{$j = 1$ to $k$}{
            $\uu=\textsc{Vector}(\LL_{-S}(T),\epsilon)$\;
           %Compute the smallest eigenvector $\uu$ corresponding to the grounded Laplacian matrix $\LL_{-S}(T)$\;
           \For{each $e \in Q$}{
                $i=$ the non-grounded node $e$ is incident to\;
                Compute $\tilde{\lambda}_T(e)=2\uu_i \sum_{t \in \calN_i \backslash S}\uu_t$\;
           }
			Select $e$ s.t.  $e \gets \mathrm{arg\, max}_{e \in Q} \tilde{\lambda}_T(e)$ \;
            Update $Q \gets Q \backslash \{e\}$\;
			Update solution $T \gets T \cup \{e\}$ \;
		}
		\Return $T$
\end{algorithm}

\begin{thm}\label{thm:time2}
The time complexity of Algorithm~\ref{alg:2} is $\tilde{O}(km)$.
\end{thm}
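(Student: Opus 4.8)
The plan is to bound the cost of a single pass of the outer \textbf{for} loop (lines~3--11 of Algorithm~\ref{alg:2}) by $\tilde{O}(m)$ and then multiply by the $k$ iterations. Each iteration does three kinds of work: (i) the call $\uu=\textsc{Vector}(\LL_{-S}(T),\epsilon)$ on line~4; (ii) the evaluation of $\tilde{\lambda}_T(e)$ for every candidate edge $e\in Q$, together with computing the neighbourhood sums it depends on (lines~5--8); and (iii) selecting the maximizer and updating $Q$, $T$, and the adjacency structure (lines~9--11). First I would observe that throughout the execution the current graph $\calG(T)$ has $m+|T|\le m+k$ edges, and since $k\ll n\le m+1$ this is $O(m)$; hence $\LL_{-S}(T)$ always has $O(m)$ nonzero entries.

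For (i), by Lemma~\ref{LapSolver} the call to $\textsc{Vector}$ on a matrix with $O(m)$ nonzeros runs in $\tilde{O}(m)$ time and returns $\uu$ (and, if needed, the estimate $\uu^\top\LL_{-S}(T)\uu$ of $\lambda(\LL_{-S}(T))$). For (ii), the point to be careful about is that one must \emph{not} recompute $\sum_{t\in\calN_i\backslash S}\uu_t$ separately for each edge incident to $i$; instead, in a single sweep over the adjacency lists of $\calG(T)$ one precomputes, for every non-grounded node $i\in V\backslash S$, the quantity $\sigma_i:=\sum_{t\in\calN_i\backslash S}\uu_t$, at total cost $O\big(\sum_{i\in V\backslash S} d_i\big)=O(m)$. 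After this sweep, for any edge $e=(t,i)\in Q$ with $t\in S$, $i\in V\backslash S$, Lemma~\ref{lem:fin} gives $\tilde{\lambda}_T(e)=2\uu_i\sigma_i$, an $O(1)$ lookup; since $|Q|\le s(n-s)=O(n)=O(m)$ (recall $s\ll n$), computing $\tilde{\lambda}_T(e)$ for all $e\in Q$ adds only $O(m)$.

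For (iii), scanning $Q$ for the edge of largest $\tilde{\lambda}_T(e)$ costs $O(|Q|)=O(n)$, and removing that edge from $Q$, inserting it into $T$, and updating the adjacency structure of $\calG(T)$ costs $O(1)$ amortized (or $O(\log n)$ with a balanced structure), all dominated by the $\tilde{O}(m)$ already spent. Adding the three pieces, one iteration runs in $\tilde{O}(m)$ time, so the $k$ iterations run in $\tilde{O}(km)$ total, proving the theorem. There is no genuine obstacle here — the only thing that could go wrong is the bookkeeping in step~(ii): if one fails to aggregate the neighbourhood sums before touching $Q$, the inner double loop could blow up to $\Theta\big(\sum_{e\in Q} d_{i(e)}\big)$ instead of $O(m)$; once the $\sigma_i$'s are cached, the bound follows immediately.
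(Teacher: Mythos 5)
Your proposal is correct and follows essentially the same decomposition as the paper's proof: $\tilde{O}(m)$ per iteration for the call to $\textsc{Vector}$, $O(m)$ for evaluating $\tilde{\lambda}_T(e)$ over all of $Q$, and $O(n)$ for the selection step, multiplied by $k$ iterations. Your extra observation about caching the neighbourhood sums $\sigma_i$ before scanning $Q$ is a sensible implementation detail that justifies the $O(m)$ bound the paper simply asserts for that step, but it does not change the argument.
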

\begin{proof}
In Algorithm~\ref{alg:2}, we first initialize the edge set $T$ and the candidate set $Q$ with size $O(n)$. Then in $k$ iterations, we  calculate the  eigenvector $\uu$ associated with the smallest eigenvalue of matrix $\LL_{-S}(T)$, which takes $\tilde{O}(m)$ time. Next, for all edges $e\in Q$, we calculate $\tilde{\lambda}_T(e)$, which takes $O(m)$ time. Finally, we select the edge that maximizes $\tilde{\lambda}_T(e)$ and add it to the edge set $T$, which need $O(n)$ time. Thus, the overall running time of Algorithm~\ref{alg:2} is $\tilde{O}(km)$.
\end{proof}

\begin{remark}
Theorem~\ref{thm:time2} shows that the time complexity of Algorithm~\ref{alg:2} is $\tilde{O}(km)$, which is nearly linearly with respect to the number of edges. Thus, among graphs with the same number of nodes, Algorithm~\ref{alg:2} is more efficient on sparser graphs. Specifically, for trees---the sparest class of connected graphs, the complexity of Algorithm~\ref{alg:2} is $\tilde{O}(kn)$. As shown in the experiments Section, for those networks with similar number of nodes, Algorithm~\ref{alg:2} runs faster on sparse networks than on dense ones.
\end{remark}

%\textcolor{blue}{
%\begin{remark}
%Based on Theorem~\ref{thm:time2}, the time complexity of Algorithm~\ref{alg:2} is $\tilde{O}(km)$, meaning that the algorithm's complexity is nearly linearly related to the number of edges. Consequently, the algorithm runs more efficiently on sparser networks with the same number of nodes. Specifically, in the case of a tree, which is the sparest connected network, the algorithm's complexity is $\tilde{O}(kn)$. In the subsequent experimental results, we observe that for networks with similar node sizes, the more edges there are, the longer Algorithm~\ref{alg:2} takes to run.
%\end{remark}}

\iffalse
\begin{remark}
In Algorithm~\ref{alg:2}, the approximation $\tilde{\lambda}_T(e)$ of  the upper bound  $\bar{\lambda}_T(e)$ is used to estimate  the eigengap $\lambda_T(e)$. It can be seen in the Experiment Section, Algorithm~\ref{alg:2} consistently outperforms all the baseline methods over a diverse of real-world networks, indicating that $\bar{\lambda}_T(e)$ is an excellent predictor of $\lambda_T(e)$.
\end{remark}
\fi

%it can be observed from a large amount of experiments that the results returned by two algorithms are very close to each other, and far more better than other methods. is approximated by its upper bound,

\section{Experiments}

In this section, we perform  experiments on a large set of real-life networks to demonstrate the effectiveness, efficiency, and scalability of our algorithms.  %Algorithms~\ref{alg:1} and~\ref{alg:2}.

\begin{table}
	\centering
		\caption{Running time, smallest eigenvalues of the grounded Laplacian matrices and their ratio   for Algorithms~\ref{alg:1} and~\ref{alg:2} on real-world networks.}\label{tab:data}
		%\resizebox{\columnwidth}{!}{
\fontsize{8}{8}\selectfont			
\begin{tabular}{m{2.5cm}m{1.2cm}<{\centering}m{1.2cm}<{\centering}p{0.9cm}<{\centering}m{0.8cm}<{\centering}m{0.8cm}<{\centering}m{0.8cm}p{0.8cm}<{\raggedleft}}
				\toprule
				\multirow{3}*{ Networks} &\multirow{3}*{ Nodes} &\multirow{3}*{ Edges} &   \multicolumn{2}{c}{Running} &  \multicolumn{3}{c}{Smallest Eigenvalue} \\
				& & & \multicolumn{2}{c}{Time (s)} &\multicolumn{3}{c}{$(\times 10^{-4})$}
				\\
				\cmidrule{4-5}
				\cmidrule{6-8}
				&  & & $\textsc{Greedy}$ & $\textsc{Fast}$ &  $\textsc{Greedy}$ & $\textsc{Fast}$ & Ratio \\
				\midrule
%                Tribes & 16& 58& &0.015 &0.001 &15& & 2889& 2882&1.002\\
 %               FirmHiTech & 33& 147& &0.065& 0.002& 33& & 1388& 1367&1.015\\
%                Karate & 34& 78& & 0.066& 0.002& 33&& 1252& 1221&1.025\\
%                Dolphins & 62& 159& &4.88& 0.274&18&  & 715.6& 705.5&1.014\\

                Minnesota & 2642& 3303&  9931& 2.63&   105& 99.4& 1.064\\
                USGrid & 4941& 6594&  20283& 3.88& 68.7& 66.8&1.028\\
                Bcspwr10 & 5300& 8271&  27127& 5.01&  69.1& 67.3&1.026\\
                RealityCall & 6809& 51247&  37219& 3.21&  74.8& 71.9&1.040\\
                Eurqsa & 7245& 20509&  61009& 7.49&  73.5& 69.3&1.062\\
                WHOIS & 7476& 56943&  67233& 8.04&  66.1& 65.8&1.003\\
                Rajat06 & 10922& 18061&  --& 11.8&  --& 27.9&--\\
                Indochina2004 & 11358& 47606&  --& 10.5& --& 40.3&--\\

                Amazon & 91813& 125704&  --& 179&--&   0.23&--\\
                Luxembourg& 114599& 119666&  --& 151&  --& 0.25&--\\
                Sk2005 & 121422& 334419&  --& 740&  --& 2.58&--\\
                Caida & 190914& 607610&  --& 294& --& 2.52&--\\
                Pwtk & 217891& 5653221&  --& 1278&  --& 2.03&--\\
                DBLP2010 & 226413& 716460&  --& 398&  --& 2.17&--\\
                TwitterFollows & 404719& 713319&  --&270& --& 1.22&--\\
                Delicious & 536108 & 1365961 &  --& 729&--&0.93&--\\
                FourSquare & 639014 & 3214986 &  --& 873&--&0.78&--\\
                BerkStan & 685230& 7600595&  --& 5138& --& 0.67&--\\
                RoadNetPA & 1088092& 3083796&  --& 3001&  --& 0.27&--\\
                YoutubeSnap & 1134890& 2987624&  --&2757&  --& 0.44&--\\
				\bottomrule
			\end{tabular}
		%}
\end{table}

\subsection{Setup}

\textbf{Datasets}. We select a large number of real-world networks of different sizes, up to one million nodes, which are publicly available in KONECT~\cite{Ku13} and SNAP~\cite{LeSo16}. Information of networks and their nodes and edges are shown in the first three columns of Table~\ref{tab:data}. All our experiments are conducted on the largest connected component for each of these networks.

%Due to the need of the problem, we complete the subsequent experiments on their maximum connected subgraphs.
 %Our algorithms are tested on a diverse set of real-world networks with up to one million nodes, all of which are publicly available in KONECT~\cite{Ku13} and SNAP~\cite{LeSo16}. Without loss of generosity, we only consider  connected networks, while for any disconnected network we  execute experiments on its largest component.  Relevant statistics of these datasets is summarized in Table~\ref{tab:data}, where networks are shown in increasing order of the numbers of nodes.

\textbf{Machine Configuration and Reproducibility}. Our algorithms are programmed and implemented in \textit{Julia}, for convenient use of algorithm $\textsc{Vector}(\LL_{-S},\epsilon)$ in the \textit{Julia} Laplacian.jl package, which is available on \url{https://github.com/danspielman/Laplacians.jl}. The error parameter $\epsilon$ is set to be $10^{-3}$ in the experiments. All our experiments are conducted on a machine equipped with 3.5 GHz Intel i5-4690K CPU and 16G memory, using a single thread. Our source code is available on \url{https://github.com/kedges/kedges}.

\textbf{Baseline Methods.} We select several baselines to compare with our proposed algorithms \textsc{Greedy} and \textsc{Fast}. The baseline methods are as follows.
\begin{enumerate}
    \item \textsc{Optimum}: choose $k$ edges from the candidate set $Q$, which maximize $\lambda(T)$ by exhaustive search.
	\item \textsc{Degree}: choose $k$ edges incident to  $k$ non-grounded nodes with largest degrees.
	\item \textsc{Eigenvector}~\cite{Ru00}: choose $k$ edges incident to  $k$ non-grounded nodes with highest eigenvector centrality associated with leading eigenvalue of adjacency matrix.
	\item \textsc{Betweenness}~\cite{Br01}: choose $k$ edges incident to $k$ non-grounded nodes with highest betweenness centrality. %BrPi07
	\item \textsc{Closeness}~\cite{Be65}: choose $k$ edges incident to  $k$ non-grounded nodes with highest closeness centrality. %BrPi07
    \item \textsc{$k$-center}~\cite{ShSoSaJo14,Go85}: choose one edge at one time to minimize the maximum distance between grounded nodes and non-grounded nodes by iterating $k$ times.
    \item \textsc{Eigen-approx}: select $k$ edges in an iterative way. At each round, choose one edge  incident to a non-grounded node with the largest component in the eigenvector $\uu$ associated with the smallest eigenvalue for the grounded Laplacian matrix.
\end{enumerate}

\subsection{Effectiveness and accuracy}

To exhibit the effectiveness of our two algorithms, we compare the results returned by our algorithms with the optimal solution on four small networks from KONECT~\cite{Ku13}: Dolphins with 62 nodes and 159 edges, Tribes with 16 nodes and 58 edges, Karate with 34 nodes and 78 edges, and FirmHiTech with 33 nodes and 147 edges. For each of these four networks, we randomly select one node as the grounded node, and then add  $k= 1, 2, \cdots, 5$ edges. Due to the small scales of these networks, we can get the optimum solution within an acceptable  time. We present the results in Fig.~\ref{fig:1}, which shows that both algorithms \textsc{Greedy} and \textsc{Fast} return very accurate results, which are very close to the optimum solution. %, and are far more better than the approximation ratio analysed in Theorem~\ref{them:appx}.

\begin{figure}[tbp]
  \centering
  % Requires \usepackage{graphicx}
  \includegraphics[width=1\linewidth]{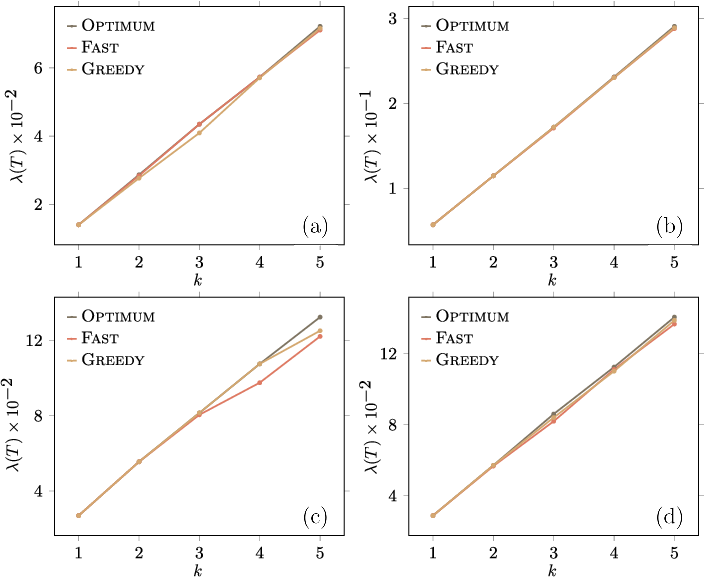}\\
\caption{$\lambda(T)$ given by \textsc{Fast}, \textsc{Greedy}, and \textsc{Optimum} schemes for $k$ ranging from 1 to 5 on four small networks: (a) Dolphins, (b) Tribes, (c) Karate and (d) FirmHiTech.}
\label{fig:1}
\end{figure}

To further show the similarity of algorithms \textsc{Greedy} and \textsc{Fast} in terms of effectiveness, we conduct experiments on six mid-sized networks  in the first six rows of Table~\ref{tab:data}. For each of these six networks, we randomly select 5 nodes as grounded nodes and set $k=50$, then compute the smallest eigenvalue of the grounded Laplacian matrix  after edge addition, by using algorithms \textsc{Greedy} and \textsc{Fast}. In the last three columns of Table~\ref{tab:data}, we can see the smallest eigenvalues returned by the two algorithms have little difference, since the ratio of them is close to 1.

%%%%%%%%%%%%%%%%%%%%%%%%%%%%%%%%%%%%%%%%%
\begin{figure}[tbp]
  \centering
  % Requires \usepackage{graphicx}
  \includegraphics[width=1\linewidth]{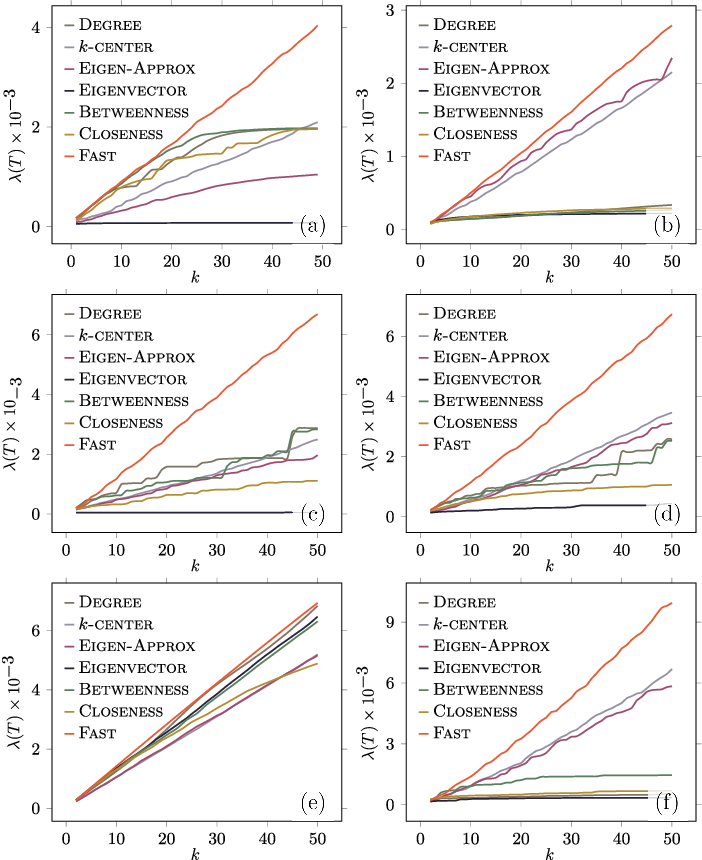}\\
\caption{$\lambda(T)$ returned  by \textsc{Fast},  \textsc{Degree},  \textsc{Eigenvector}, \textsc{Betweenness}, \textsc{Closeness}, \textsc{$k$-center} and \textsc{Eigen-Approx}, for $k$ ranging from 1 to 50 on six medium-size networks: (a) Indochina2004, (b) Rajat06, (c) USGrid, (d) Bcspwr10, (e) Eurqsa, and (f) Minnesota. }
\label{fig:3}
\end{figure}
%%%%%%%%%%%%%%%%%%%%%%%%%%%%%%%%%%%%%%%%%

%%%%%%%%%%%%%%%%%%%%%%%%%%%%%%%%%%%%
%\input{graph4.tex}
\begin{figure}[tbp]
  \centering
  % Requires \usepackage{graphicx}
  \includegraphics[width=1\linewidth]{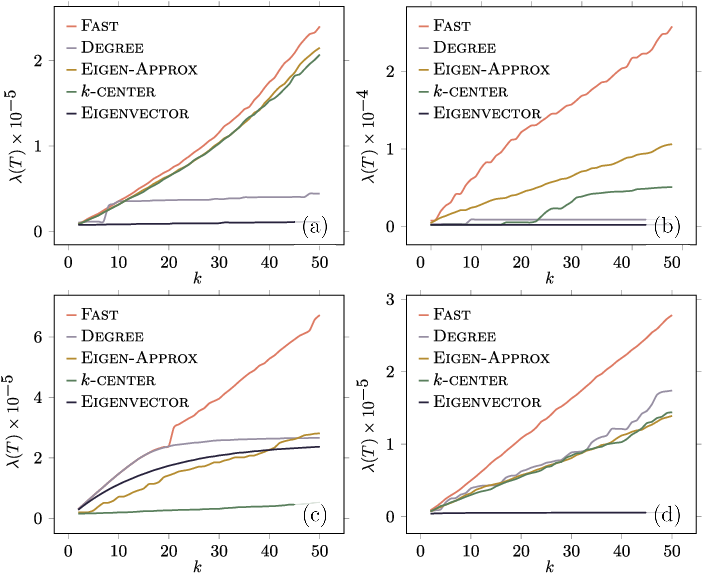}\\
\caption{$\lambda(T)$ returned  by \textsc{Fast} compared with  \textsc{Degree}, \textsc{$k$-center}, \textsc{Eigen-Approx} and \textsc{Eigenvector} with $k$ ranging from 1 to 50 on four large  networks: (a) Amazon, (b) Sk2005, (c) BerkStan, (d) RoadNetPA. }
\label{fig:4}
\end{figure}
%%%%%%%%%%%%%%%%%%%%%%%%%%%%%%%%%%%%

We also compare the results returned by algorithm \textsc{Fast} with those corresponding to six baseline methods on six mid-size real networks, including \textsc{Degree},  \textsc{Eigenvector}, \textsc{Betweenness}, \textsc{Closeness}, \textsc{$k$-center} and \textsc{Eigen-Approx}.  For each network, we randomly select 5 nodes as grounded nodes and add $k=1,2, \cdots, 50$ edges. The smallest eigenvalues returned by these 7 methods are reported in Fig.~\ref{fig:3}, which demonstrates that for Problem~\ref{prob:2}, algorithm \textsc{Fast} always  outperforms these six baselines.

\subsection{Efficiency and scalability}

Theorems~\ref{them:appx} and~\ref{thm:time2} show theoretically that  the  time complexity of algorithms \textsc{Greedy} and \textsc{Fast} differ greatly. Here we  show experimentally that algorithm \textsc{Fast} runs faster than algorithm \textsc{Greedy}. To this end, we randomly select 5 grounded nodes and add $k=50$ edges by our two algorithms for each network listed in Table~\ref{tab:data}, and report their running time for computing the  smallest eigenvalues of all augmented networks in the middle three columns of Table~\ref{tab:data}. For the networks with more than 10,000 nodes, algorithm \textsc{Greedy} fails due to its tremendous time cost that may be more than one day, while algorithm \textsc{Fast} finishes running within 2 hours even on networks with over one million nodes, indicating that \textsc{Fast} is scalable to large networks.

Besides our Algorithm \textsc{Fast}, the four baselines, \textsc{Degree}, \textsc{$k$-center}, \textsc{Eigen-Approx} and \textsc{Eigenvector}, can also quickly solve Problem~\ref{prob:2} on large networks, the time complexity of which are $O(n)$, $O(kn)$, $\tilde{O}(km)$ and $O(m)$, respectively. However, the effectiveness of algorithm \textsc{Fast} is much better than the four baselines.   In Fig.~\ref{fig:4}, we compare the results of algorithm \textsc{Fast} with those of the four baselines on four large networks with randomly selected 5 grounded nodes and varying number $k$ of adding edges. From Fig.~\ref{fig:4}, we observe that although our \textsc{Fast} is not the fastest, it outperforms the baseline methods.

\section{Related works}

In this section, we review some applications and properties of the smallest eigenvalue of the grounded Laplacian matrix, and the eigenvalue optimization problem by edge manipulation.

\textbf{Applications.} The notion of grounded Laplacian matrix $\LL_{-S}$ was first proposed in~\cite{Mi93}, which arises in various practical scenarios. In particular, its smallest eigenvalue $\lambda(\LL_{-S})$ contains a wealth of important information in different applications. For example, in leader-follower multi-agent systems describing opinion dynamics~\cite{BaHe06,PaBa10}, $\lambda(\LL_{-S})$ characterizes the convergence speed of the states for follower agents. While in pinning control of complex networks~\cite{WaCh02,LiWaCh04}, $\lambda(\LL_{-S})$ is used to measure the effectiveness of pinning control scheme~\cite{LiXuLuChZe21}. Other applications for grounded Laplacian matrix include vehicle platooning~\cite{BaJoMiPa12,HeMaHuSe15,PiBaJo22}, power systems~\cite{TeBaGa15}, centrality of a group of nodes~\cite{LiPeShYiZh19}, and so on.

%including leader-follower multi-agent systems~\cite{BaHe06,PaBa10}, pinning control of complex networks~\cite{WaCh02,LiWaCh04}, vehicle platooning~\cite{BaJoMiPa12,HeMaHuSe15}, and power systems~\cite{TeBaGa15}. It is now established that the spectrum of the grounded Laplacian matrix plays a fundamental role in characterizing the performance of these networked systems. For example, the smallest eigenvalue of $\LL_{-S}$ characterizes the convergence rate of leader-follower systems~\cite{RaJiMeEg09} and the effectiveness of pinning control scheme~\cite{LiXuLuChZe21}.

\textbf{Properties of the smallest eigenvalue $\lambda (\LL_{-S})$}. Due to the vast applications of the smallest eigenvalue $\lambda (\LL_{-S})$ for the grounded Laplacian matrix $\LL_{-S}$, it has attracted considerable attention of many groups in recent years, in order to analyze its properties and unveil its connection with other matrices. In~\cite{Pi14,PiSu14,PiSu16,PiShSu15}, upper and lower bounds of $\lambda (\LL_{-S})$ were provided; while in~\cite{LiXuLuChZe21}, some other properties of $\lambda (\LL_{-S})$ were analyzed in terms of the $(|S|+1)$-th smallest eigenvalue of the Laplacian matrix $\LL$ and the smallest degree of nodes in set $V \backslash S$. In addition, properties of $\lambda (\LL_{-S})$ for weighted undirected graphs~\cite{MaBe17} and directed graphs~\cite{XiCa17} were also studied.

%In~\cite{Pi14,PiSu14,PiSu16,PiShSu15}, upper and lower bounds on the smallest eigenvalue were provided, in terms of the sum of the weights of the edges between the grounded nodes in $S$ and the smallest eigenvector. In~\cite{PiShFiSu18}, graph–theoretic bounds were given for the smallest eigenvalue. While in~\cite{LiXuLuChZe21}, its properties were analyzed based on the $(s+1)$-th smallest eigenvalue of the Laplacian matrix $\LL$, the minimal degree of nodes in set $V \backslash S$, and the number of edges connected nodes in $S$ and $V \backslash S$. Properties for the smallest eigenvalue of grounded Laplacian matrix of weighted undirected~\cite{MaBe17} and directed~\cite{XiCa17} also received attention from the  scientific community.

\textbf{Eigenvalue optimization with edge manipulation.} Addition or deletion of edges is a commonly used approach in various practical optimization problems, including the optimization problems for specific eigenvalues or their ratios of different matrices. In~\cite{ChToPrElFaFa16,ChPeYiTo21} maximizing or minimizing the leading eigenvalue of the adjacency matrix was explored by adding or removing a fixed number of edges, which was shown to be NP-hard~\cite{VaStKuLiVaLiWa11}. In~\cite{ZhZhCh21}, the problem of minimizing the largest eigenvalue of non-backtracking matrix~\cite{LiChZh17,LiZh19} was addressed by removing a given number of edges. In~\cite{GhBo06}, the authors addressed the problem of maximizing the algebraic connectivity by creating a given number of edges, which is the smallest non-zero eigenvalue of the Laplacian matrix $\LL$. Finally, in~\cite{JaVeToLiAb21}, optimizing the synchronizability was considered by adding edges, where synchronizability is measured by the ratio of the largest eigenvalue and the smallest non-zero eigenvalue of Laplacian matrix $\LL$. Thus far, optimizing $\lambda (\LL_{-S})$ of grounded Laplacian matrix by adding edge has not been studied. Moreover, prior methods for optimizing eigenvalues of other matrices do not apply the problem studied in this paper.

\section{Conclusion}
In this paper, we proposed and studied the problem of maximizing the smallest eigenvalue of the grounded Laplacian matrix $\LL_{-S}$ of a graph $\calG=(V,E)$ with $n$ nodes, $m\ll n^2$ edges, and a grounded node set $S$, by adding $k$ nonexistent edges from the candidate edge set $Q=(V\times V)\backslash E$ to graph $\calG$. We proved that the objective function is monotonic increasing, but not submodular. Since the size of the candidate edge set is $O(n^2)$, we simplified the problem by restricting the candidate edge set to be its proper set $(S \times (V\backslash S))\backslash E$ with $O(n)$ edges, and proved that both the simplified problem and the original problem have identical optimal solutions. We then developed two greedy algorithms to solve the problem. The first is a simple greedy algorithm, with $O(kn^4)$ time complexity and a $(1-e^{-\gamma\alpha})/\alpha$ approximation ratio, where $\gamma$ and $\alpha$ are, respectively, the submodularity ratio and curvature of the objective function.  We provided a lower bound for $\gamma$ and an upper bound for $\alpha$, when subgraph $\calG[V\backslash S]$ is connected. The second one is a fast algorithm, with running time $\tilde{O}(km)$, where $\tilde{O}(\cdot)$ suppresses the ${\rm poly} (\log n)$ factors. Extensive experiments on real networks show that the results of our algorithms are close to each other, and both algorithms are more effective than baseline methods. Moreover, our fast algorithm is scalable to large networks with more than one million nodes. Finally, it should be mentioned that our methods and results only apply to undirected graphs. In future, we will extend or modify our techniques to directed graphs.

\section*{Declaration of competing interest}
The authors declare that they have no known competing financial interests or personal relationships that could have appeared to influence the work reported in this paper.

\section*{Acknowledgements}
This work was supported by the National Natural Science Foundation of China (No. U20B2051), the Shanghai Municipal Science and Technology Major Project (No.  2021SHZDZX03), and Ji Hua Laboratory, Foshan, China (No.X190011TB190).

%smallest eigenvalue maximization problem by adding $k<<n$ new edges to the network with a given leader set $S$. The convergence rate equals to the smallest eigenvalue of the grounded Laplacian matrix in the continuous DeGroot model. We discuss the monotonicity of the objective function when the new edge links two followers or one follower and one leader. And we give a heuristic method to show that adding edges between one follower and one leader is more effective than adding edges between two followers, which can help us greatly reduce the size of candidate edge set from $O(n^2)$ to $O(n)$. Under this new restriction, we show the non-submodularity of the objective function, but we give bounds of its submodularity ratio and curvature when the remaining network by deleting node set $S$ is connected, and they ensure an approximation ratio. We then design an algorithm with time complexity of $\tilde{O}(km)$ by maximizing $\lambda(T)$ in a iterative way under an approximation. Extensive experiments shows the effectiveness and efficiency of our fast algorithm, and it is scalable on networks even with more than one million nodes and outperforms other baselines.

\bibliographystyle{elsarticle-num}
%\bibliographystyle{ACM-Reference-Format}

%\bibliography{reference,newref,kedges,EigenOpt}
%\balance
\bibliography{newref}
%\biboptions{longnamesfirst}
%\biboptions{longnamesfirst,angle,semicolon}

\end{document}